\def\ls{\mspace{4mu}}
\def\gmk{\boldsymbol{g}_{m,k}}
\def\gmkh{\boldsymbol{\hat{g}}_{m,k}}
\def\gmkt{\boldsymbol{\tilde{g}}_{m,k}}
\def\Gm{\boldsymbol{{G}}}
\def\Gmh{\boldsymbol{{\hat{G}}}}
\def\Gmt{\boldsymbol{{\tilde{G}}}}
\def\Pmat{\boldsymbol{{P}}}
\def\Ms{\boldsymbol{{M}}^{(s)}}
\def\Mi{\boldsymbol{{M}}^{(i)}}
\def\Tmat{\boldsymbol{{T}}}
\def\Wm{\boldsymbol{W}_{m}}
\def\Wmh{\boldsymbol{W}_{m}^*}
\def\x{\boldsymbol{x}}
\def\hmk{\boldsymbol{h}_{m,k}}
\newcommand{\removelatexerror}{\let\@latex@error\@gobble}
\newtheorem{theorem}{Theorem}
\newtheorem{prop}{Proposition}
\begin{document}
\title{Smart Hybrid Beamforming and Pilot Assignment for 6G Cell-Free Massive MIMO}

\author{Carles Diaz-Vilor,$^{1}$ Alexei Ashikhmin,$^{2}$ and Hong Yang$^{2}$
\thanks{$^{1}$C. Diaz-Vilor is with the Univ. of California, Irvine. Email: 
        {\tt\small \{cdiazvil\} at uci.edu }} %
\thanks{$^{2}$A. Ashikhmin and H. Yang are with Nokia Bell Labs, Murray Hill. Emails:
        {\tt\small  \{alexei.ashikhmin, h.yang\} at nokia-bell-labs.com }}
}

\maketitle
\begin{abstract}
We investigate Cell-Free massive MIMO networks, where each access point (AP) is equipped with a hybrid  analog-digital transceiver, reducing the complexity and cost compared to a fully digital transceiver. Asymptotic approximations for the spectral efficiency are derived for uplink and downlink. Capitalizing on these expressions, a max-min problem is formulated enabling us to optimize the (\emph{i}) analog beamformer  at the APs and (\emph{ii}) pilot assignment. Simulations show that the optimization of these variables substantially increases the minimum user  throughput
\end{abstract}

\begin{IEEEkeywords}
Cell-Free,  MIMO, MMSE, RZF, hybrid beamforming, large-scale, optimization, SINR
\end{IEEEkeywords}

\IEEEpeerreviewmaketitle

\section{Introduction}
A prospective candidate considered for beyond-5G wireless networks is the cell-free massive MIMO (CF-mMIMO) topology, where every user (UE) potentially connects to every access point (AP), and takes the principles of cell cooperation to the limit; see \cite{7827017, 8845768, 7917284,8630677, 9043895} and the references therein.

In parallel, forthcoming technologies will be operating at higher frequencies (i.e. mmWave or THz bands), and therefore the transceivers complexity experiences a key trade-off: data rate vs power consumption. Additionally, CF networks will cover larger areas compared to cellular systems, and therefore the severity of the path loss requires the APs to be equipped with large arrays to compensate the attenuation, demanding even more power if fully digital structures are used. 

A possible solution that has attracted a lot of attention is a hybrid transceiver \cite{7445130,6717211}, composed by two stages: (a) the analog part, in which the antennas are connected to a few RF chains by means of phase shifters, and (b) the digital part. While the former stage dramatically reduces the AP complexity and power consumption, the performance decreases as well. Consequently, properly designing the analog beamformer might be a mean to reduce the performance gap with respect to fully digital transceivers. To the best of our knowledge, there are two main works dealing with the construction of the analog beamformer as a function of slow fading channel parameters \cite{8678745,7919262}, which is also investigated in this paper and shown to outperform the previous references.

Once the analog part is designed, we investigate the uplink and downlink of two digital benchmarks: (\emph{i}) minimum mean squared error (MMSE) reception and (\emph{ii}) regularized zero forcing (RZF) precoding. Asymptotic approximations on the signal-to-interference-and-noise-ratio (SINR) are derived based on \cite{6172680}, and shown to be tight for finite-dimension systems under the previous decoding/precoding. For a given hybrid structure, and capitalizing on the asymptotic approximations, another relevant problem is studied in this paper: pilot assignment, for which a greedy algorithm based on the asymptotic expressions is provided.

Finally, we derive two novel bounds on the gap between hybrid and fully digital structures. It is shown that such bounds only depend on the channel matrix eigenvalues.

\section{System Model}
Consider a CF massive MIMO system composed by $M$ APs, each equipped with $N$ antennas and $L (\le N)$ RF chains serving $K$ single antenna users (UEs). We assume each AP is connected to a central processing unit (CPU) through high capacity fronthaul links. 
Denote by $\hmk \in \mathbb{C}^{N \times 1}$ the channel between AP $m$ and UE $k$. Then
\begin{align}
    \hmk \sim  \mathcal{N}_{\mathbb{C}}(\boldsymbol{0},\boldsymbol{R}_{m,k}),
\end{align}
with $\boldsymbol{R}_{m,k}$ being the spatial correlation matrix. Each AP performs hybrid beamforming with the aim of reducing the number of RF chains at the transceivers, and therefore their cost and complexity. Particularly, each AP contains an analog matrix $\boldsymbol{W}_m \in \mathbb{C}^{N \times L}$ such that $ \big( | \boldsymbol{W}_m | \big)_{n,l} = \frac{1}{\sqrt{N}}$, emulating phase shifters and whose entries will be designed later. As a consequence, the effective channel between AP $m$ and UE $k$ is represented by $\gmk \in \mathbb{C}^{L \times 1}$ 
\begin{align}
    \gmk = \boldsymbol{W}_m^{*} \hmk.
\end{align}
Hence, $\gmk \sim \mathcal{N}_{\mathbb{C}}(\boldsymbol{0},\boldsymbol{R}_{m,k}^{(g)})$ with $\boldsymbol{R}_{m,k}^{(g)} = \boldsymbol{W}_m^* \boldsymbol{R}_{m,k} \boldsymbol{W}_m$. 

\subsection{Channel Estimation Process}
A portion of the total number of resource units, the latter denoted by $\tau_c$, is used for channel estimation. During $\tau (\leq \tau_c)$ channel uses, UE $k$ is assigned a pilot $\boldsymbol{\phi}_k \in \mathbb{C}^{\tau \times 1}$ with $||\boldsymbol{\phi}_k||^2 = \tau$ and
the pilot matrix is denoted by $\boldsymbol{\Phi} = (\boldsymbol{\phi}_1,\dots,\boldsymbol{\phi}_K) \in \mathbb{C}^{\tau \times K}$. Upon pilot transmission at a certain power $p^{(t)}$, the observations at the $m$th AP are
\begin{align}
    \boldsymbol{Y}_m = \sqrt{p^{(t)}} ( \boldsymbol{g}_{m,1},\dots,\boldsymbol{g}_{m,K} ) \boldsymbol{\Phi}^{\textrm{{T}}} + \boldsymbol{W}_m^*\boldsymbol{Z}_m,
\end{align}
with $\boldsymbol{Z}_m  \sim  \mathcal{N}_{\mathbb{C}}(\boldsymbol{0},\sigma^2\boldsymbol{I}_{N})$ for $\sigma^2$ being the noise power.  Standard MMSE estimation leads to the next estimates \cite{EstTheory}
\begin{align}
    \gmkh = \sqrt{p^{(t)}} \boldsymbol{R}_{m,k}^{(g)}( \boldsymbol{\phi}_k \otimes \boldsymbol{I}_L )^* \boldsymbol{\Psi}_m^{-1} \text{vec}(\boldsymbol{Y}_m) ,
\end{align}
with 
\begin{align}
    \boldsymbol{\Psi}_m = {p^{(t)}} ( \boldsymbol{\Phi} \otimes \boldsymbol{I}_L ) \boldsymbol{R}_{m}^{(g)} ( \boldsymbol{\Phi} \otimes \boldsymbol{I}_L )^* +   \sigma^2 \boldsymbol{I}_{\tau} \otimes  \boldsymbol{W}_m^*\boldsymbol{W}_m ,
\end{align}
for $\boldsymbol{R}_{m}^{(g)} = \mathrm{diag}\{ \boldsymbol{R}_{m,k}^{(g)} \ls \mathrm{for} \ls k=1,\dots,K  \}$. It can be verified that $\gmk = \gmkh + \gmkt$ with $\gmkt$ denoting the error, uncorrelated with the estimate. More concretely, $\gmkh \sim \mathcal{N}_{\mathbb{C}}(\boldsymbol{0},\boldsymbol{\Gamma}_{m,k}^{(g)})$ with $\boldsymbol{\Gamma}_{m,k}^{(g)}$ defined by
\begin{align}
    \boldsymbol{\Gamma}_{m,k}^{(g)} & = \mathbb{E}\{ \gmkh \gmkh^* \} \\
    & = \boldsymbol{R}_{m,k}^{(g)}( \boldsymbol{\phi}_k \otimes \boldsymbol{I}_L )^* \boldsymbol{\Psi}_m^{-1} ( \boldsymbol{\phi}_k \otimes \boldsymbol{I}_L ) \boldsymbol{R}_{m,k}^{(g)},
\end{align}
and the channel error following $\gmkt \sim \mathcal{N}_{\mathbb{C}}(\boldsymbol{0},\boldsymbol{C}_{m,k}^{(g)})$ with $\boldsymbol{C}_{m,k}^{(g)} = \boldsymbol{R}_{m,k}^{(g)} - \boldsymbol{\Gamma}_{m,k}^{(g)}$.

\subsection{Scalable Cell-Free}
Although CF networks allow users to establish connectivity to multiple APs, scalability must be taken into account. Therefore only a subset of APs jointly serve a particular user. Hence, we define by $\mathcal{F}_k$ the subset of APs involved in the decoding of the $k$th UE and by $\mathcal{U}_m$ the subset of UEs treated as signal by AP $m$. Thus, the binary matrix $\boldsymbol{M} = (\boldsymbol{m}_1,\dots,\boldsymbol{m}_K ) \in \mathbb{Z}_2^{M \times K}$ whose entries are
\begin{equation}\label{eq:MsDef}
  \left( \boldsymbol{M} \right)_{m,k} =
  \begin{cases}
                                   1  & \text{if $k \ls \in \ls \mathcal{U}_m$} \\
                                  0  & \text{otherwise}
 \end{cases},
\end{equation}
accounts for scalability. Provided that each AP observes an $L$-dimensional signal after the hybrid beamforming stage, the expanded version of $\boldsymbol{M}$ is $\Ms = \boldsymbol{M} \otimes \boldsymbol{1}_L$ with $\boldsymbol{1}_L$ an $L$-dimensional vector of ones. The complementary matrix $\Mi = \boldsymbol{1} - \Ms$ accounts for the disregarded UEs per AP.

\subsection{Uplink \& Downlink Data Transmission}
After data transmission, the signal collected by the $M$ APs is $\boldsymbol{y} = ( \boldsymbol{y}_1,\dots,\boldsymbol{y}_M)^{\text{T}} \in \mathbb{C}^{ML \times 1}$ with $\boldsymbol{y}_m \in \mathbb{C}^{L \times 1}$
\begin{align}\label{eq:y}
    \boldsymbol{y} & =  (\Ms \circ \boldsymbol{G}) \x + (\Mi \circ \boldsymbol{G} ) \x + \boldsymbol{W}^*  \boldsymbol{n},
\end{align}
with $\circ$ denoting the Hadamard product,  $\boldsymbol{G} 
\in \mathbb{C}^{ML \times K}$ being the effective channel matrix whose entries are 
$(\boldsymbol{G})_{m,k} = \boldsymbol{g}_{m,k} \in \mathbb{C}^{L \times 1}$. Vector $\x = (\sqrt{p_1}s_1,\dots,\sqrt{p_K}s_K)^{\rm T}$ for given UE transmit powers and symbols, denoted by $p_k$ and $s_k$, respectively. Finally, $\boldsymbol{W} = \textrm{diag}\{ \boldsymbol{W}_m \ls \textrm{for} \ls m=1,\dots,M\}$ and   $\boldsymbol{n} = (\boldsymbol{n}_1,\dots,\boldsymbol{n}_M)^{\mathrm{T}}$ where $\boldsymbol{n}_m \sim \mathcal{N}_{\mathbb{C}}(\boldsymbol{0},\sigma^2 \boldsymbol{I}_N)$.

In the downlink, the APs jointly precode the users data. More particularly, the precoder intended for UE $k$ is denoted by $\boldsymbol{v}_k \in \mathbb{C}^{ML\times1}$ and after data transmission, the signal collected at UE $k$ is
\begin{align}\label{eq:dlsignal}
    y_k = \sum \limits_{i=1}^K \boldsymbol{g}_k^{*}\boldsymbol{v}_i \sqrt{p_i} s_i + n_k,
\end{align}
where $n_k \sim \mathcal{N}_{\mathbb{C}}({0},\sigma^2)$. 

\section{Spectral Efficiency Analysis}
\subsection{Uplink MMSE Reception}
Provided that for UE $k$ only $|\mathcal{F}_k|$  APs are relevant, taking  the  rows of $\boldsymbol{y}$ associated to $\mathcal{F}_k$ produces the following reduced signal model
\begin{align}\label{eq:yk}
    \boldsymbol{y}_k & = 
    \underbrace{\Ms_k \circ \Gmh_k \x}_\text{signal}  + \underbrace{ \big( \Ms_k \circ \Gmt_k + \Mi_k \circ \Gm_k \big) \, \x +  \boldsymbol{W}^*_k  \boldsymbol{n}}_\text{ effective noise: $\boldsymbol{z}_k$ },
\end{align}
where matrices in \eqref{eq:yk} are the reduced version of the original matrices which contain the rows related to $\mathcal{F}_k$ and all columns. Moreover, $\boldsymbol{z}_k \sim \mathcal{N}_{\mathbb{C}}(\boldsymbol{0},\boldsymbol{\Sigma}_k)$ with $\boldsymbol{\Sigma}_k$ being a block diagonal matrix $\boldsymbol{\Sigma}_k =  \mathrm{diag}\{ \boldsymbol{\Sigma}_{k,m} \in \mathbb{C}^{L\times L} \ls \mathrm{for} \ls m \in \mathcal{F}_k  \} $ where the diagonal terms are
\begin{align}
    \boldsymbol{\Sigma}_{k,m} = \sum_{i \in \mathcal{U}_m} \boldsymbol{C}_{m,i}^{(g)}p_i + \sum_{i \notin \mathcal{U}_m} \boldsymbol{R}_{m,i}^{(g)}p_i + \sigma^2 \Wmh{} \Wm.
\end{align}

In the uplink, the combiner maximizing the SINR is the MMSE, achieving a maximum value of

\begin{align}\label{eq:SINRmmse}
    \mathrm{SINR}_k = \boldsymbol{\hat{g}}_k^{*} \bigg(  \sum 
    \limits_{i\neq k}^{K} ( \boldsymbol{m}_{k,i}^{(s)}\circ \boldsymbol{\hat{g}}_i) ( \boldsymbol{m}_{k,i}^{(s)}\circ \boldsymbol{\hat{g}}_i)^{*}p_i + \boldsymbol{\Sigma}_k \bigg)^{-1}  \boldsymbol{\hat{g}}_k.
\end{align}
where $\boldsymbol{\hat{g}}_k$ and $\boldsymbol{\hat{g}}_i$ are the $k$th and $i$th columns of $\boldsymbol{\hat{G}}_k$, respectively, and a similar definition applies to $\boldsymbol{m}_{k,i}^{(s)}$. As a consequence, after accounting for the pilot overhead $\frac{\tau}{\tau_c}$, the  ergodic spectral efficiency that the $k$th UE can achieve is
\begin{equation}\label{eq:SEff}
    \mathrm{SE}_k = \left(1 - \frac{\tau}{\tau_c} \right) \mathbb{E}\{ \log_2 ( 1 + \mathrm{SINR}_k )\}.
\end{equation}

\subsection{Downlink RZF Precoding}
Various precoding strategies can be used to encode the users data. However,  RZF   provides an outstanding performance as  studied in the literature. More particularly, the subset RZF precoding, denoted by $\boldsymbol{V} = (\boldsymbol{v}_1,\dots,\boldsymbol{v}_K)$, follows
\begin{align}\label{eq:RZFcomb}
    \boldsymbol{V} & = (\boldsymbol{v}_1,\dots,\boldsymbol{v}_K) \\
    & = \big[ (\Ms \circ \boldsymbol{\hat{G}})(\Ms \circ \boldsymbol{\hat{G}})^* + \rho \boldsymbol{I}_{ML} \big]^{-1} (\Ms \circ \boldsymbol{\hat{G}})\boldsymbol{\Lambda} .
\end{align}

with $\rho$ being the regularitzation parameter and $\boldsymbol{\Lambda} = \mathrm{diag}( \lambda_1,\dots,\lambda_K)$. Different formulations can be used for $\lambda_k$, such as to ensure (\emph{i}) $\mathbb{E} \{  ||\boldsymbol{W}\boldsymbol{v}_k||^2 \} \leq 1$ or (\emph{ii}) $||\boldsymbol{W}\boldsymbol{v}_k||^2 \leq  1$. In our case, since perfect CSI is not available, we use the former formulation. Once User $k$ receives $y_k$, as defined in Eq. \eqref{eq:dlsignal}, the following spectral efficiency can be achieved:
\begin{align}
    \mathrm{SE}_k = \left(1 - \frac{\tau}{\tau_c} \right)  \log_2 ( 1 + \mathrm{SINR}_k ),
\end{align}
with 
\begin{align}\label{eq:SINRRZF}
    \mathrm{SINR}_k = \frac{|\mathbb{E}\{\boldsymbol{g}_k^{*} \boldsymbol{v}_k   \}  |^2 p_k}{ \sum \limits_{i\neq 1}^K \mathbb{E}\{|\boldsymbol{g}_k^{*} \boldsymbol{v}_i|^2\}p_i + \mathrm{var}( \boldsymbol{g}_k^{*} \boldsymbol{v}_k )p_k + \sigma^2}
\end{align}

\section{Asymptotic Analysis}
To evaluate the previous SINR expressions, we consider the asymptotic regime, $MN,K\to\infty$ with finite $MN/K$ and investigate the convergence of the spectral efficiency expressions  to deterministic limits. Provided that the subsets account for the non-zero entries in the random matrices, it is required that they grow with the network as well, i.e., $|\mathcal{F}_k|N,|\mathcal{U}_m|\to \infty$ $\forall \ls k,m$. The premises for this convergence need the involved matrices to satisfy two technical conditions:  (a) the inverse of the resolvent matrix in \eqref{eq:SINRmmse} and \eqref{eq:RZFcomb} to exist,  ensured by  $\boldsymbol{\Sigma}_k$ and $\rho\boldsymbol{I}_{ML}$, respectively, and that (b) $\boldsymbol{\Gamma}_k^{(g)} = \text{diag}\{{m}_{m,k} \cdot \boldsymbol{\Gamma}_{m,k}^{(g)} \ls m=1,\dots,M \}$ has uniformly bounded spectral norm, for ${m}_{m,k}$ being the $(m,k)$ element of  \eqref{eq:MsDef}. Under these conditions, the following approximations can be made.

\begin{theorem}\label{Th:T1}
For $|\mathcal{F}_k|N,|\mathcal{U}_m|\to \infty$ $\forall \ls k,m$ and UL MMSE combining, $\mathrm{SINR}_k \approx \overline{\mathrm{SINR}}_k$ with $\overline{\mathrm{SINR}}_k$ given in \eqref{eq:SINRAppr}.
\begin{align}\label{eq:SINRAppr}
    \overline{\mathrm{SINR}}_k =\frac{p_k}{|\mathcal{F}_k|N}  \sum \limits_{m \in \mathcal{F}_k} \mathrm{tr} \Big[ \boldsymbol{\Gamma}_{m,k}^{(g)} \boldsymbol{T}_{m,k} \Big],
\end{align}
where
\begin{align}\label{eq:Tmk}
    \boldsymbol{T}_{m,k} = \bigg( \frac{1}{|\mathcal{F}_k|N} \sum \limits_{i = 1}^K \frac{{m}_{m,k} \cdot  \boldsymbol{\Gamma}_{m,i}^{(g)} }{1 + e_{i}}p_i + \frac{1}{|\mathcal{F}_k|N}\boldsymbol{\Sigma}_{m,k} \bigg)^{-1}.
\end{align}
The coefficients $e_{i}$ are obtained iteratively, $e_{i} = \lim_{n \to \infty} e_{i}^{(n)}$, given $e_{i}^{(0)} = |\mathcal{F}_i|N$ and the recursion in \eqref{eq:coeffek}.
    
\begin{align}\label{eq:coeffek}
    e_{i}^{(n)} &  = p_i\mathrm{tr} \Bigg[  \boldsymbol{\Gamma}_{i}^{(g)} \bigg( \sum \limits_{j = 1}^K \frac{  \boldsymbol{\Gamma}_{j}^{(g)} p_j }{1 + e_{j}^{(n-1)}} +  \boldsymbol{\Sigma}_{i} \bigg)^{-1} \Bigg] 
\end{align}
\end{theorem}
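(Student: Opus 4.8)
The plan is to treat the random $\mathrm{SINR}_k$ as a quadratic form in the estimate $\boldsymbol{\hat{g}}_k$ against the inverse of a random interference-plus-noise matrix, and to attack it with the deterministic-equivalent machinery for resolvents of sums of independent rank-one terms developed in \cite{6172680}. The argument rests on two pillars: a trace (quadratic-form) concentration lemma, and a coupled fixed-point characterization of the resolvent. Throughout I would work in the stated regime $MN,K\to\infty$ with $|\mathcal{F}_k|N,|\mathcal{U}_m|\to\infty$ and normalize every matrix by $|\mathcal{F}_k|N$, so that traces remain $O(1)$; recall that the covariances $\boldsymbol{\Gamma}_{m,k}^{(g)}$ have trace growing with the antenna count $N$ through the beamformer $\boldsymbol{W}_m$.

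First I would isolate the desired-signal term. Writing $\boldsymbol{Q}_k=\sum_{i\neq k}p_i(\boldsymbol{m}_{k,i}^{(s)}\circ\boldsymbol{\hat{g}}_i)(\boldsymbol{m}_{k,i}^{(s)}\circ\boldsymbol{\hat{g}}_i)^*+\boldsymbol{\Sigma}_k$ for the interference-plus-noise matrix, I observe that $\boldsymbol{\hat{g}}_k\sim\mathcal{N}_{\mathbb{C}}(\boldsymbol{0},\boldsymbol{\Gamma}_k^{(g)})$ is independent of $\boldsymbol{Q}_k$ since the latter excludes index $k$. Condition (a) keeps $\boldsymbol{Q}_k$ invertible, as $\boldsymbol{\Sigma}_k$ bounds its smallest eigenvalue away from zero, while condition (b) bounds spectral norms uniformly, so the trace lemma yields
\[
\tfrac{1}{|\mathcal{F}_k|N}\big(\boldsymbol{\hat{g}}_k^*\boldsymbol{Q}_k^{-1}\boldsymbol{\hat{g}}_k-\mathrm{tr}(\boldsymbol{\Gamma}_k^{(g)}\boldsymbol{Q}_k^{-1})\big)\xrightarrow{\text{a.s.}}0 .
\]
Hence $\mathrm{SINR}_k$ concentrates around $\tfrac{p_k}{|\mathcal{F}_k|N}\,\mathrm{tr}(\boldsymbol{\Gamma}_k^{(g)}\boldsymbol{Q}_k^{-1})$, the power $p_k$ entering through the desired-signal term of $\boldsymbol{x}$.

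Second, and this is the technical heart, I would establish a deterministic equivalent for $\tfrac{1}{|\mathcal{F}_k|N}\mathrm{tr}(\boldsymbol{\Gamma}_k^{(g)}\boldsymbol{Q}_k^{-1})$. The matrix $\boldsymbol{Q}_k$ is exactly a sum of $K-1$ independent rank-one terms built from $\boldsymbol{\hat{g}}_i\sim\mathcal{N}_{\mathbb{C}}(\boldsymbol{0},\boldsymbol{\Gamma}_i^{(g)})$ plus the deterministic Hermitian $\boldsymbol{\Sigma}_k$, the canonical setting of the resolvent deterministic-equivalent theorem. I would introduce the leave-one-out matrices $\boldsymbol{Q}_{k,i}$ deleting user $i$, relate $\boldsymbol{Q}_k^{-1}$ to $\boldsymbol{Q}_{k,i}^{-1}$ via the Sherman--Morrison identity, and use the trace lemma again to show that each scalar $\tfrac{p_i}{|\mathcal{F}_k|N}\mathrm{tr}(\boldsymbol{\Gamma}_i^{(g)}\boldsymbol{Q}_{k,i}^{-1})$ concentrates around a deterministic $e_i$. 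Collecting these replaces each random rank-one term by its averaged version weighted by $1/(1+e_i)$, yielding the deterministic matrix $\boldsymbol{T}_{m,k}$ of \eqref{eq:Tmk}, while self-consistency of the $e_i$ produces the coupled system \eqref{eq:coeffek}.

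Third, I would exploit the block-diagonal structure from the hybrid beamformer and the scalability masks: $\boldsymbol{\Gamma}_k^{(g)}$ and $\boldsymbol{\Sigma}_k$, hence $\boldsymbol{T}_k$, are block diagonal over the serving APs $m\in\mathcal{F}_k$ with $L\times L$ blocks, so the trace splits into the per-AP sum $\sum_{m\in\mathcal{F}_k}\mathrm{tr}(\boldsymbol{\Gamma}_{m,k}^{(g)}\boldsymbol{T}_{m,k})$ of \eqref{eq:SINRAppr}. Existence and uniqueness of $\{e_i\}$ and convergence of the iteration from $e_i^{(0)}=|\mathcal{F}_i|N$ then follow from the standard Stieltjes-transform / standard-interference-function argument, since the map defined by \eqref{eq:coeffek} is positive, monotone, and scalable and therefore admits a unique positive fixed point reached by the iterates. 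The main obstacle I anticipate is justifying the second step in the presence of the masks $\boldsymbol{m}_{k,i}^{(s)}$: masking makes the effective per-user covariance depend on the serving set, so I must verify that the masked terms still meet the uniform-boundedness and independence hypotheses of the deterministic-equivalent theorem and that the leave-one-out perturbation goes through block-wise without destroying the rank-one structure. The block structure and masks are mostly bookkeeping, but they are precisely where the canonical result of \cite{6172680} would be misapplied if handled carelessly.
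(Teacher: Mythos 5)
Your proposal follows essentially the same route as the paper's proof: a trace/rank-one-perturbation step to replace the quadratic form $\boldsymbol{\hat{g}}_k^*\boldsymbol{Q}_k^{-1}\boldsymbol{\hat{g}}_k$ by $\mathrm{tr}(\boldsymbol{\Gamma}_k^{(g)}\boldsymbol{Q}_k^{-1})$, then a deterministic equivalent for the resolvent of the sum of independent rank-one terms plus $\boldsymbol{\Sigma}_k$, and finally the block-diagonal split over $m\in\mathcal{F}_k$. The only difference is that where you propose to re-derive the fixed-point characterization via leave-one-out and Sherman--Morrison, the paper simply invokes \cite[Theorem 1]{6172680} with the substitutions $\boldsymbol{D}=\boldsymbol{\Gamma}_k^{(g)}p_k$, $\boldsymbol{R}_j=\boldsymbol{\Gamma}_j^{(g)}p_j$, $\boldsymbol{S}+z\boldsymbol{I}=\frac{1}{|\mathcal{F}_k|N}\boldsymbol{\Sigma}_k$ --- which is precisely the argument you sketch.
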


\begin{proof}
The proof can be found in App. \ref{proof:SINRAssym}.
\end{proof}

\begin{theorem}\label{Th:T2}
For $|\mathcal{F}_k|N,|\mathcal{U}_m|\to \infty$ $\forall \ls k,m$ and DL RZF precoding, $\mathrm{SINR}_k \approx \overline{\mathrm{SINR}}_k$ with $\overline{\mathrm{SINR}}_k$ given in \eqref{eq:SINRAppr2}
\begin{align}\label{eq:SINRAppr2}
    \overline{\mathrm{SINR}}_k =  \frac{\frac{\mu_k^2}{\delta_k} p_k}{ \sum \limits_{i\neq 1}^K \frac{\theta_{k,i}}{\delta_i}p_i + \sigma^2},
\end{align}
where
\begin{align}
    \mu_k = \frac{1}{MN} \mathrm{tr}\big[ \boldsymbol{\Gamma}_k^{(g)} \boldsymbol{T} \big],
\end{align}
\begin{align}
    \delta_k = \frac{1}{(MN)^2}  \mathrm{tr}\big[ \boldsymbol{\Gamma}_k^{(g)} \boldsymbol{T}^{'}(\frac{\rho}{MN},\boldsymbol{W}^* \boldsymbol{W}  ) \big],
\end{align}
\begin{align}
    \theta_{k,i} & =  \frac{1}{(MN)^2}  \mathrm{tr}\big[ \boldsymbol{R}_k^{(g)} \boldsymbol{T}^{'}(\frac{\rho}{MN},\boldsymbol{\Gamma}_i^{(g)}) \big] + \nonumber \\ & \mspace{22mu}\frac{1}{MN} \frac{\mu_k^2 \frac{1}{MN} \mathrm{tr}\big[ \boldsymbol{\Gamma}_k^{(g)} \boldsymbol{T}^{'}(\frac{\rho}{MN},\boldsymbol{\Gamma}_i^{(g)}) \big] }{(1 + \mu_k)^2} - \nonumber \\ & \mspace{22mu} \frac{2}{MN}\mathbb{R} \bigg\{ \frac{\mu_k \frac{1}{MN} \mathrm{tr}\big[ \boldsymbol{\Gamma}_k^{(g)} \boldsymbol{T}^{'}(\frac{\rho}{MN},\boldsymbol{\Gamma}_i^{(g)})  }{1 +  \mu_k} \bigg\},
\end{align}

\begin{align}
    \boldsymbol{T} = \bigg( \frac{1}{MN}\sum \limits_{i = 1}^K \frac{ \boldsymbol{\Gamma}_{i}^{(g)} }{1 + e_{i}} + \frac{\rho}{MN} \boldsymbol{I}_{ML} \bigg)^{-1}.
\end{align}

The coefficients $e_{i}$ are obtained iteratively with $e_{i} = \lim_{n \to \infty} e_{i}^{(n)}$, given $e_{i}^{(0)} = MN$ and the recursion in \eqref{eq:coeffek2}
\begin{align}\label{eq:coeffek2}
    e_{k}^{(n)} = \mathrm{tr} \Bigg[  \boldsymbol{\Gamma}_{k}^{(g)} \bigg(  \sum \limits_{i = 1}^K \frac{ \boldsymbol{\Gamma}_{i}^{(g)} }{1 + e_{i}^{(n-1)}} + \rho \boldsymbol{I}_{ML} \bigg)^{-1} \Bigg].
\end{align}
Moreover, matrix 
\begin{align}
    \boldsymbol{T}^{'}(\frac{\rho}{MN},\boldsymbol{\Gamma}_{i}^{(g)}) = \Tmat \boldsymbol{\Gamma}_{i}^{(g)} \Tmat + \Tmat \frac{1}{M}\sum \limits_{k=1}^K \frac{\boldsymbol{\Gamma}_{k}^{(g)} e_k^{'}}{(1+e_k)^2} \Tmat,
\end{align}
and coefficients $\boldsymbol{e}^{'}(\frac{\rho}{MN}) = ({e}_1^{'},\dots,{e}_K^{'})$ are calculated as
\begin{align}
    \boldsymbol{e}^{'}(\frac{\rho}{MN}) = \big( \boldsymbol{I}_K - \boldsymbol{J} \big)^{-1}\boldsymbol{v}(\frac{\rho}{MN}),
\end{align}
with $\boldsymbol{J} \in \mathbb{C}^{K \times K}$ and $\boldsymbol{v}(\frac{\rho}{MN}) \in \mathbb{C}^{K \times 1}$ defined as
\begin{align}
    \big( \boldsymbol{J} \big)_{k,l} =  \frac{\frac{1}{MN} \mathrm{tr} \big[ \boldsymbol{\Gamma}_{k}^{(g)} \Tmat \boldsymbol{\Gamma}_{l}^{(g)} \Tmat \big] }{MN(1 + e_l)^2},
\end{align}
and
\begin{align}
    \big( \boldsymbol{v}(\frac{\rho}{MN}) \big)_{k} = \frac{1}{MN} \mathrm{tr} \big[ \boldsymbol{\Gamma}_{k}^{(g)} \Tmat  \boldsymbol{\Gamma}_{i}^{(g)} \Tmat \big].
\end{align}
\end{theorem}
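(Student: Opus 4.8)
The plan is to follow the large-system framework of \cite{6172680}, adapting its deterministic-equivalent machinery to the present hybrid, cell-free, imperfect-CSI setting. First I would introduce the resolvent
\begin{align}
\Q = \Big[ (\Ms \circ \Gmh)(\Ms \circ \Gmh)^* + \rho \boldsymbol{I}_{ML} \Big]^{-1},
\end{align}
so that the precoder column reads $\boldsymbol{v}_k = \lambda_k \Q \boldsymbol{a}_k$ with $\boldsymbol{a}_k := \boldsymbol{m}_k^{(s)} \circ \boldsymbol{\hat{g}}_k$ the masked estimate, whose covariance is exactly $\boldsymbol{\Gamma}_k^{(g)}$. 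The scalar $\lambda_k$ is fixed by the power constraint $\mathbb{E}\{\|\W \boldsymbol{v}_k\|^2\} \le 1$, i.e. $\lambda_k^2 = 1/\mathbb{E}\{\boldsymbol{a}_k^* \Q \W^*\W \Q \boldsymbol{a}_k\}$, whose deterministic limit will produce $\delta_k$. Writing $\boldsymbol{g}_k = \boldsymbol{\hat{g}}_k + \boldsymbol{\tilde{g}}_k$ with the estimation error independent of the estimate, the three random quantities in \eqref{eq:SINRRZF} that must be deterministically approximated are the signal gain $\mathbb{E}\{\boldsymbol{g}_k^* \boldsymbol{v}_k\}$, the per-user interference/variance powers $\mathbb{E}\{|\boldsymbol{g}_k^* \boldsymbol{v}_i|^2\}$, and the normalization $\mathbb{E}\{\|\W\boldsymbol{v}_k\|^2\}$, which will collapse onto $\mu_k$, $\theta_{k,i}$ and $\delta_k$.

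For the first-order (signal) term I would use the Sherman-Morrison identity to extract the rank-one contribution of $\boldsymbol{a}_k$ from $\Q$, introducing the leave-one-out resolvent $\Q_k$ built without user $k$. This yields $\mathbb{E}\{\boldsymbol{g}_k^* \boldsymbol{v}_k\} = \lambda_k \,\mathbb{E}\{ \boldsymbol{\hat{g}}_k^* \Q_k \boldsymbol{a}_k /(1 + \boldsymbol{a}_k^* \Q_k \boldsymbol{a}_k)\}$, where the cross term carried by $\boldsymbol{\tilde{g}}_k$ vanishes in expectation by independence. Applying the trace (quadratic-form concentration) lemma gives $\boldsymbol{a}_k^* \Q_k \boldsymbol{a}_k \to e_k$, with $e_k$ the unique solution of the fixed-point system \eqref{eq:coeffek2}, and the standard deterministic equivalent $\tfrac{1}{MN}\mathrm{tr}(\boldsymbol{A}\Q) \to \tfrac{1}{MN}\mathrm{tr}(\boldsymbol{A}\Tmat)$ identifies the numerator with $\mu_k = \tfrac{1}{MN}\mathrm{tr}(\boldsymbol{\Gamma}_k^{(g)}\Tmat)$. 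Existence and uniqueness of the $e_k$, together with $\|\boldsymbol{\Gamma}_k^{(g)}\|$ uniformly bounded and the subset-growth hypothesis, are exactly conditions (a)--(b) stated before the theorem.

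The second-order terms are where the derivative matrix $\boldsymbol{T}'(\rho/MN,\boldsymbol{A})$ enters. Here I would expand $\mathbb{E}\{|\boldsymbol{g}_k^* \Q \boldsymbol{a}_i|^2\}$ via a double rank-one perturbation, removing both the $k$- and $i$-indexed contributions from $\Q$; the bilinear quantity $\tfrac{1}{MN}\mathrm{tr}(\boldsymbol{B}\Q \boldsymbol{A}\Q)$ then converges to $\tfrac{1}{MN}\mathrm{tr}(\boldsymbol{B}\,\boldsymbol{T}'(\rho/MN,\boldsymbol{A}))$, where $\boldsymbol{T}'$ is obtained by differentiating the fixed-point equations and solving the induced linear system $\boldsymbol{e}' = (\boldsymbol{I}_K - \boldsymbol{J})^{-1}\boldsymbol{v}$. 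The decomposition $\boldsymbol{g}_k = \boldsymbol{\hat{g}}_k + \boldsymbol{\tilde{g}}_k$ is precisely what splits $\theta_{k,i}$ into an $\boldsymbol{R}_k^{(g)}$ piece (from $\boldsymbol{\tilde{g}}_k$, which sees the full channel covariance) and the $\boldsymbol{\Gamma}_k^{(g)}$ correction pieces (from $\boldsymbol{\hat{g}}_k$, coupled through the shared index). The same $\boldsymbol{T}'$-calculus with $\boldsymbol{A}=\W^*\W$ delivers $\delta_k$ and closes the expression for $\lambda_k$.

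I expect the main obstacle to be the interference term $\theta_{k,i}$: unlike the signal term, index $i$ appears simultaneously inside the precoder $\boldsymbol{v}_i$ and inside the resolvent $\Q$, so the rank-one extraction must be performed consistently for both $k$ and $i$ while tracking the correction factors $1/(1+e_i)$, and one must certify that the neglected cross terms are $o(1)$ uniformly. Establishing the second-order deterministic equivalent, namely that the derivative of the Stieltjes transform is faithfully captured by the $(\boldsymbol{I}_K-\boldsymbol{J})^{-1}$ system and remains well defined ($\boldsymbol{I}_K - \boldsymbol{J}$ invertible) under the stated growth and boundedness conditions, is the technically delicate step. Once the three limits $\mu_k$, $\theta_{k,i}$ and $\delta_k$ are in hand, substituting them into \eqref{eq:SINRRZF} and invoking a continuous-mapping/dominated-convergence argument yields \eqref{eq:SINRAppr2}.
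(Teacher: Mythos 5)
Your proposal takes essentially the same route as the paper's proof: leave-one-out (rank-one/Sherman--Morrison) extraction of users $k$ and $i$ from the regularized resolvent, the trace-concentration and deterministic-equivalent results of \cite{6172680} (their Lemmas 1, 4, 6 and Theorems 1--2) to identify $\mu_k$, $\delta_k$ and the three-term split of $\theta_{k,i}$, the same treatment of $\lambda_k$ via the average power constraint, and a final continuous-mapping step. The only point handled more explicitly in the paper is the verification that $\mathrm{var}(\boldsymbol{g}_k^{*}\boldsymbol{v}_k)$ vanishes asymptotically, which you fold into the interference bookkeeping rather than showing separately.
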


\begin{proof}
The proof can be found in App. \ref{proof:SINRAssymDL}.
\end{proof}

From the continuous mapping theorem \cite{Mann1943OnSL}, the following holds: $\mathrm{SE}_k \approx \left(1 - \frac{\tau}{\tau_c} \right)   \log_2 ( 1 + \overline{\mathrm{SINR}}_k )$ with the corresponding $\overline{\mathrm{SINR}}_k$ provided above for $MN$ and $K$  $\to \infty$.

\section{Spectral Efficiency Optimization}
Note that the asymptotic SE approximations derived in the previous section only depend on large scale parameters. Therefore,  we can formulate different asymptotic optimization problems. However, with the aim of increasing fairness in the network we focus on the following max-min problem:
\begin{equation}\label{opt:Problem}
    \begin{aligned}
        \max_{\boldsymbol{W} , \ls  \boldsymbol{\Phi}} & \min_k  \overline{\mathrm{SINR}}_k.\\
        \textrm{s.t.} \quad & \big(|\boldsymbol{W}_m|\big)_{n,l} = \frac{1}{\sqrt{N}}
    \end{aligned}
\end{equation}
where the optimization variables are two: (\emph{i}) analog beamforing matrix $\boldsymbol{W}$ and (\emph{ii}) pilot matrix, studied separately.

\subsection{Analog Beamformer Design}\label{Sec:BFMat}
The design of $\boldsymbol{W} = \mathrm{diag}\{ \boldsymbol{W}_m \ls \mathrm{for} \ls m=1,\dots,M  \}$ is challenging given the complexity of the SINR. 
Therefore, directly solving \eqref{opt:Problem} poses a major challenge. However, under perfect CSI, some algebraic properties on $\boldsymbol{W}_m$ can be extracted and therefore used for its design. Concretely, we first disregard the unit-modulus constraint and after SVD decomposition $\boldsymbol{W}$ factorizes as $\boldsymbol{W} = \boldsymbol{U} \boldsymbol{Q}$ with semi-unitary $\boldsymbol{U}$, i.e. $\boldsymbol{U}^* \boldsymbol{U} = \boldsymbol{I}_{ML}$.

\begin{prop}\label{prop:WMMSE}
Under perfect CSI UL-MMSE reception, any nonsingular $\boldsymbol{Q}$ provides maximum SINR.
\end{prop}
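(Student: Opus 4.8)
The plan is to reduce the claim to the well-known invariance of the MMSE SINR under an invertible linear transformation of the observation, and then to upgrade that invariance to optimality by a subspace argument. First I would specialize \eqref{eq:SINRmmse} to perfect CSI: the estimate then coincides with the channel, so $\boldsymbol{\hat{g}}_k=\boldsymbol{g}_k$ and $\boldsymbol{\Gamma}_{m,i}^{(g)}=\boldsymbol{R}_{m,i}^{(g)}$, whence the error covariance $\boldsymbol{C}_{m,i}^{(g)}=\boldsymbol{R}_{m,i}^{(g)}-\boldsymbol{\Gamma}_{m,i}^{(g)}$ vanishes and the noise-plus-interference block collapses to $\boldsymbol{\Sigma}_{k,m}=\sum_{i\notin\mathcal{U}_m}\boldsymbol{R}_{m,i}^{(g)}p_i+\sigma^2\boldsymbol{W}_m^*\boldsymbol{W}_m$. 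I denote by $\boldsymbol{A}_k$ the bracketed resolvent in \eqref{eq:SINRmmse}, so that $\mathrm{SINR}_k=\boldsymbol{g}_k^*\boldsymbol{A}_k^{-1}\boldsymbol{g}_k$.

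Next, substituting the factorization $\boldsymbol{W}=\boldsymbol{U}\boldsymbol{Q}$ with $\boldsymbol{U}^*\boldsymbol{U}=\boldsymbol{I}_{ML}$ and block-diagonal $\boldsymbol{Q}=\mathrm{diag}\{\boldsymbol{Q}_m\}$, every object entering $\boldsymbol{A}_k$ acquires a common $\boldsymbol{Q}$ sandwich: $\boldsymbol{g}_{m,k}=\boldsymbol{Q}_m^*\boldsymbol{U}_m^*\boldsymbol{h}_{m,k}$, $\boldsymbol{R}_{m,i}^{(g)}=\boldsymbol{Q}_m^*(\boldsymbol{U}_m^*\boldsymbol{R}_{m,i}\boldsymbol{U}_m)\boldsymbol{Q}_m$, and $\boldsymbol{W}_m^*\boldsymbol{W}_m=\boldsymbol{Q}_m^*\boldsymbol{Q}_m$ since $\boldsymbol{U}_m^*\boldsymbol{U}_m=\boldsymbol{I}_L$. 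Collecting the blocks indexed by $\mathcal{F}_k$ into $\boldsymbol{Q}_k=\mathrm{diag}\{\boldsymbol{Q}_m:m\in\mathcal{F}_k\}$, this yields $\boldsymbol{g}_k=\boldsymbol{Q}_k^*\tilde{\boldsymbol{g}}_k$ and $\boldsymbol{\Sigma}_k=\boldsymbol{Q}_k^*\tilde{\boldsymbol{\Sigma}}_k\boldsymbol{Q}_k$, where the tilde objects are the corresponding quantities built from $\boldsymbol{U}$ alone.

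The delicate point, and the one I expect to be the main obstacle, is handling the scalability mask $\boldsymbol{m}_{k,i}^{(s)}$ inside the interference sum. The key observation is that $\boldsymbol{M}^{(s)}=\boldsymbol{M}\otimes\boldsymbol{1}_L$ is constant across each AP's $L$-dimensional block, so the Hadamard product with $\boldsymbol{m}_{k,i}^{(s)}$ either retains or annihilates an entire block and hence commutes with the block-diagonal map $\boldsymbol{Q}_k^*$, i.e. $\boldsymbol{m}_{k,i}^{(s)}\circ(\boldsymbol{Q}_k^*\tilde{\boldsymbol{g}}_i)=\boldsymbol{Q}_k^*(\boldsymbol{m}_{k,i}^{(s)}\circ\tilde{\boldsymbol{g}}_i)$. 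With this identity the full resolvent factors as $\boldsymbol{A}_k=\boldsymbol{Q}_k^*\tilde{\boldsymbol{A}}_k\boldsymbol{Q}_k$, and the quadratic form telescopes, $\boldsymbol{g}_k^*\boldsymbol{A}_k^{-1}\boldsymbol{g}_k=\tilde{\boldsymbol{g}}_k^*\boldsymbol{Q}_k\boldsymbol{Q}_k^{-1}\tilde{\boldsymbol{A}}_k^{-1}\boldsymbol{Q}_k^{-*}\boldsymbol{Q}_k^*\tilde{\boldsymbol{g}}_k=\tilde{\boldsymbol{g}}_k^*\tilde{\boldsymbol{A}}_k^{-1}\tilde{\boldsymbol{g}}_k$, which is manifestly independent of $\boldsymbol{Q}$.

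Finally, to turn \emph{independence} into \emph{maximality}, I would read the beamformed MMSE combiner $\boldsymbol{c}$ as an effective receiver $\boldsymbol{W}\boldsymbol{c}$ acting on the pre-beamforming observation, so that the attainable SINR is that of the best linear estimate confined to $\mathrm{range}(\boldsymbol{W})$. A nonsingular square $\boldsymbol{Q}$ preserves the range, $\mathrm{range}(\boldsymbol{U}\boldsymbol{Q})=\mathrm{range}(\boldsymbol{U})$, whereas any rank-deficient $\boldsymbol{Q}$ produces a strict subspace over which the optimum can only decrease. Hence all nonsingular $\boldsymbol{Q}$ share the same maximal range and the common value $\tilde{\boldsymbol{g}}_k^*\tilde{\boldsymbol{A}}_k^{-1}\tilde{\boldsymbol{g}}_k$, which is therefore the largest SINR achievable for the fixed $\boldsymbol{U}$, establishing the proposition.
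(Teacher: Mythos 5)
Your proof is correct, and its engine is the same as the paper's: factor $\boldsymbol{W}=\boldsymbol{U}\boldsymbol{Q}$ and observe that a nonsingular $\boldsymbol{Q}$ cancels inside the quadratic form $\boldsymbol{g}_k^*\boldsymbol{A}_k^{-1}\boldsymbol{g}_k$ because every matrix entering the resolvent carries the same $\boldsymbol{Q}^*(\cdot)\boldsymbol{Q}$ sandwich. You do, however, go beyond the paper's argument in two genuine respects. First, the paper simplifies to $\boldsymbol{M}^{(s)}=\boldsymbol{1}$ and works in the antenna domain, merely asserting that the conclusion survives for general subsets; you keep the mask and supply the observation that makes this true, namely that $\boldsymbol{M}^{(s)}=\boldsymbol{M}\otimes\boldsymbol{1}_L$ acts blockwise and therefore commutes with the block-diagonal $\boldsymbol{Q}_k^*$, so the factorization $\boldsymbol{A}_k=\boldsymbol{Q}_k^*\tilde{\boldsymbol{A}}_k\boldsymbol{Q}_k$ still holds. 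Second, the paper stops at ``$\mathrm{SINR}_k$ does not depend on $\boldsymbol{Q}$'' and from invariance alone declares the common value a maximum; your range-monotonicity step --- $\mathrm{range}(\boldsymbol{U}\boldsymbol{Q})=\mathrm{range}(\boldsymbol{U})$ for nonsingular $\boldsymbol{Q}$ and a strictly smaller subspace otherwise, with the constrained MMSE SINR monotone in the receiver subspace --- is what actually earns the word ``maximum'' in the proposition. Both additions are sound and, if anything, tighten the paper's own reasoning; the only thing worth stating explicitly in a final write-up is that singular $\boldsymbol{Q}$ must be handled by reinterpreting the combiner on the (rank-deficient) range rather than through the resolvent inverse, which then fails to exist --- a point the paper also only gestures at.
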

\begin{proof}
The proof can be found in App. \ref{proof:WMMSE}.
\end{proof}
According to \cite{6200372},  $\sum \limits_{i\neq k}^K |\boldsymbol{g}_k^{*} \boldsymbol{v}_i|^2p_i  + \sigma^2 \approx \sum \limits_{i\neq k}^K |\boldsymbol{g}_i^{*} \boldsymbol{v}_k|^2p_i  + \sigma^2$. Under the condition that the previous approximation is tight, the following proposition, which is similar to the result obtained in \cite{7919262} for another metric, can be obtained.
\begin{prop}\label{prop:WRZF}
Under perfect CSI DL-RZF precoding, the SINR is maximum when $\boldsymbol{Q}$ is semi-unitary: $\boldsymbol{Q} \boldsymbol{Q}^* = \boldsymbol{I}_{ML}$. 
\end{prop}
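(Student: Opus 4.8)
The plan is to reduce the downlink RZF SINR, under perfect CSI ($\Gmh=\Gm$) and the stated approximation from \cite{6200372}, to a virtual-uplink Rayleigh quotient in which the digital factor $\Q$ enters only through $\boldsymbol{B}=\Q\Q^*$, and then to show that this quotient is maximized precisely at $\boldsymbol{B}=\boldsymbol{I}_{ML}$. The guiding intuition is the contrast with Proposition~\ref{prop:WMMSE}: in UL-MMSE the noise term $\sigma^2\Wmh\Wm=\sigma^2\Q^*\Q$ co-transforms with the channel, so any nonsingular $\Q$ cancels; in DL-RZF the regularizer $\rho\boldsymbol{I}_{ML}$ is \emph{fixed} and does not co-transform, so it is exactly this term that breaks the $\Q$-invariance and pins down the optimizer.

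First I would exploit the block factorization $\W=\boldsymbol{U}\Q$ with block-diagonal, semi-unitary $\boldsymbol{U}$ and block-diagonal $\Q$. Writing $\boldsymbol{G}_u=\boldsymbol{U}^*\boldsymbol{H}$ for the channel seen after the semi-unitary stage (with columns $\boldsymbol{g}_{u,k}$), one has $\Gm=\Q^*\boldsymbol{G}_u$, and since the mask $\Ms=\boldsymbol{M}\otimes\boldsymbol{1}_L$ is constant on each $L$-block while $\Q$ is block diagonal, the two commute: $\Ms\circ\Gm=\Q^*(\Ms\circ\boldsymbol{G}_u)$. I denote the masked channel $\boldsymbol{G}_{u,s}=\Ms\circ\boldsymbol{G}_u$ with columns $\boldsymbol{g}_{u,s,k}$. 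Applying the approximation $\sum_{i\neq k}|\boldsymbol{g}_k^*\boldsymbol{v}_i|^2p_i\approx\sum_{i\neq k}|\boldsymbol{g}_i^*\boldsymbol{v}_k|^2p_i$ and splitting the RZF precoder as $\boldsymbol{v}_k=\lambda_k\boldsymbol{u}_k$ with direction $\boldsymbol{u}_k=(\boldsymbol{G}_{u,s}\boldsymbol{G}_{u,s}^*\text{-type})$ and $\lambda_k$ fixed by $\|\W\boldsymbol{v}_k\|^2=\lambda_k^2\|\Q\boldsymbol{u}_k\|^2=1$, the SINR reduces to a ratio in $\boldsymbol{u}_k$ with effective noise $\sigma^2\|\Q\boldsymbol{u}_k\|^2$. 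The decisive algebraic step is the push-through identity
\begin{align}
\big[\Q^*\boldsymbol{A}\Q+\rho\boldsymbol{I}_{ML}\big]^{-1}\Q^*=\Q^{-1}\big[\boldsymbol{A}+\rho\boldsymbol{B}^{-1}\big]^{-1},
\end{align}
which cancels $\Q$ in the signal term ($\boldsymbol{g}_k^*\boldsymbol{u}_k=\boldsymbol{g}_{u,k}^*\boldsymbol{f}_k$), the interference term, and the noise term ($\|\Q\boldsymbol{u}_k\|=\|\boldsymbol{f}_k\|$), leaving $\Q$ only inside the effective regularizer $\rho\boldsymbol{B}^{-1}$. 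Setting $\boldsymbol{f}_k=(\boldsymbol{G}_{u,s}\boldsymbol{G}_{u,s}^*+\rho\boldsymbol{B}^{-1})^{-1}\boldsymbol{g}_{u,s,k}$, I would arrive at
\begin{align}
\mathrm{SINR}_k\approx\frac{p_k\,|\boldsymbol{g}_{u,k}^*\boldsymbol{f}_k|^2}{\boldsymbol{f}_k^*\big(\sum_{i\neq k}p_i\,\boldsymbol{g}_{u,i}\boldsymbol{g}_{u,i}^*+\sigma^2\boldsymbol{I}_{ML}\big)\boldsymbol{f}_k}.
\end{align}

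The final step is a Rayleigh-quotient argument. For \emph{any} $\boldsymbol{B}\succ0$ the right-hand side is upper bounded by the MMSE value $p_k\,\boldsymbol{g}_{u,k}^*(\sum_{i\neq k}p_i\boldsymbol{g}_{u,i}\boldsymbol{g}_{u,i}^*+\sigma^2\boldsymbol{I}_{ML})^{-1}\boldsymbol{g}_{u,k}$, which is \emph{independent of $\boldsymbol{B}$} and attained only when $\boldsymbol{f}_k$ is aligned with $(\sum_{i\neq k}p_i\boldsymbol{g}_{u,i}\boldsymbol{g}_{u,i}^*+\sigma^2\boldsymbol{I}_{ML})^{-1}\boldsymbol{g}_{u,k}$. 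Since $\boldsymbol{B}$ enters $\boldsymbol{f}_k$ only through the proxy regularizer $\rho\boldsymbol{B}^{-1}$, which plays the role of the noise covariance $\sigma^2\boldsymbol{I}_{ML}$, the RZF filter reaches this ceiling exactly when $\rho\boldsymbol{B}^{-1}\propto\sigma^2\boldsymbol{I}_{ML}$, i.e.\ $\boldsymbol{B}=\Q\Q^*\propto\boldsymbol{I}_{ML}$; the column-power normalization $\mathrm{tr}(\Q^*\Q)=ML$ inherited from $(|\Wm|)_{n,l}=1/\sqrt{N}$ then fixes the scale at $\Q\Q^*=\boldsymbol{I}_{ML}$, proving semi-unitarity is optimal.

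The main obstacle is justifying that the RZF filter, whose Gram $\boldsymbol{G}_{u,s}\boldsymbol{G}_{u,s}^*$ uses the \emph{masked} channel, \emph{uniform} powers, and \emph{includes} the self term $i=k$, nevertheless aligns with the true (power-weighted, self-excluded) MMSE direction precisely when $\boldsymbol{B}\propto\boldsymbol{I}_{ML}$. This is not an exact identity but holds in the asymptotic regime $MN,K\to\infty$ of the preceding section, where these discrepancies contribute only vanishing corrections to the filter direction (the standard deterministic-equivalent optimality of RZF, cf.\ \cite{6172680}) and where the duality approximation of \cite{6200372} becomes tight. A secondary point to verify is that the optimizing $\boldsymbol{B}$ remains feasible within the block-diagonal structure of $\Q$; this is immediate since $\boldsymbol{I}_{ML}$ is block diagonal, so the unconstrained optimizer already respects the constraint.
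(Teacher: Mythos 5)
Your proposal is correct and follows essentially the same route as the paper's proof: factor $\boldsymbol{W}=\boldsymbol{U}\boldsymbol{Q}$, push $\boldsymbol{Q}$ through the RZF resolvent so that it survives only inside the effective regularizer $\rho(\boldsymbol{Q}\boldsymbol{Q}^*)^{-1}$, and then argue via a Rayleigh quotient that the resulting filter direction coincides with the $\boldsymbol{Q}$-independent MMSE optimum exactly when $\boldsymbol{Q}\boldsymbol{Q}^*=\boldsymbol{I}_{ML}$. Your explicit acknowledgment of the residual misalignment (self-term and power weighting in the RZF Gram versus the MMSE direction) is a point the paper silently glosses over, and your scale-fixing via the unit-modulus normalization is a small tightening, but neither changes the argument's structure.
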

\begin{proof}
The proof can be found in App. \ref{proof:WRZF}.
\end{proof}

In order to full-fill both propositions, for UL and DL, $\boldsymbol{Q}$ can be set to 
$\boldsymbol{Q} = \boldsymbol{I}_{ML}$ and therefore $\boldsymbol{W} = \boldsymbol{U}$ meaning that the analog matrix should have orthogonal columns. The idea behind having   orthogonal columns is that interference is reduced.  
To the best of our knowledge, there are two  ways of smartly creating $\boldsymbol{W}$ explained in \cite{8678745} and \cite{7919262}, respectively. While the latter is based on perfectly known channels, the former fails to capture the complete spectrum of the channel covariance matrices. In this work, we propose a method that takes into account all possible eigenvectors/eigenvalues of all $\boldsymbol{R}_{m,k}$ with the aim of maximizing the minimum average UE power signal, which is shown to maximize the minimum SINR in our simulations. More particularly, $\boldsymbol{R}_{m,k} = \boldsymbol{V}_{m,k} \boldsymbol{\Lambda}_{m,k} \boldsymbol{V}_{m,k}^*$ with $\boldsymbol{V}_{m,k}$ having orthonormal column vectors and $\boldsymbol{\Lambda}_{m,k} = \mathrm{diag}(\lambda_{m,k}^{(1)},\dots,\lambda_{m,k}^{(N)} )$ containing the $N$ eigenvalues of $\boldsymbol{R}_{m,k}$. Note that the average signal power for UE $k$ is given by $\sum \limits_{m \in \mathcal{F}_k} \mathrm{tr}( \boldsymbol{W}_m^*\boldsymbol{R}_{m,k} \boldsymbol{W}_m)$. Therefore, such a expression is maximized whenever the columns of $\boldsymbol{W}_m$ match the eigenvectors of $\boldsymbol{R}_{m,k}$. However, not all UEs and their respective eigenmodes can be captured by $\boldsymbol{W}_m$. A selection of  $L$ out of  $NK$ should be made. As a consequence, we define the UE average signal power as
\begin{align}
    S_k = \sum \limits_{m \in \mathcal{F}_k} \sum \limits_{n=1}^N \alpha_{m,k}^{(n)} \lambda_{m,k}^{(n)}.
\end{align}
where $\alpha_{m,k}^{(n)}$ is a binary optimization variable scheduling the eigenvectors to the columns of $\boldsymbol{W}_m$. Therefore, the following optimization problem with respect to $\alpha_{m,k}^{(n)}$ can be formulated:

\begin{equation}\label{opt:ProblemAlpha}
    \begin{aligned}
        \max_{\alpha_{m,k}^{(n)}} & \min_k  S_k\\
        \textrm{s.t.} \quad & \alpha_{m,k}^{(n)} \in \{0,1\}\\
        & \sum \limits_{k=1}^{K}\sum \limits_{n=1}^N \alpha_{m,k}^{(n)} \leq L
    \end{aligned}
\end{equation}
The reverse-delete algorithm is capable of efficiently solving \eqref{opt:ProblemAlpha} without the need of an exhaustive search. The surviving $\alpha_{m,k}^{(n)}$ determine which eigenvectors of which users will compose the columns of $\boldsymbol{W}_m$. However, note that $ \boldsymbol{{W}}_m$ for $m=1,\dots,M$ does not necessarily have orthogonal columns given that, most likely, eigenvectors from multiple users will be used to construct the analog matrices. As a consequence, neither Prop. \ref{prop:WMMSE} nor Prop. \ref{prop:WRZF} are satisfied. Thus, the final unconstrained analog beamformers are obtained by  $\boldsymbol{W}_m^{(p)} = \mathcal{P}( \boldsymbol{{W}}_m )$ where $\mathcal{P}( \boldsymbol{A}_m )$ is the projection of matrix $\boldsymbol{A}_m$ into an orthonormal basis. 

Still, $\boldsymbol{W}_m^{(p)}$ is not only composed by phase shifters, i.e. the entries are not roots of unity. Therefore, for given $\boldsymbol{W}_m^{(p)}$, we aim at solving the following optimization problem:
\begin{equation}\label{Prob:ProjectW}
    \begin{aligned}
    & \underset{\boldsymbol{{\hat{W}}}_m}{\text{min}}
    & & || \boldsymbol{{W}}_m^{(p)} - \boldsymbol{\hat{W}}_m ||_{\text{F}}^2  \\
    & \text{s.t.} & &  | [\boldsymbol{\hat{W}}_m]_{n,l} | = \frac{1}{\sqrt{N}}
    \end{aligned}.
\end{equation}
Although the optimal solution is obtained  by taking the phase of the eigenvectors in $\boldsymbol{W}_m^{(p)}$, the orthogonality between columns achieved by $\mathcal{P}(\cdot)$ would be broken. Therefore, we modify our receiver. We add an \textit{orthogonality compensation matrix}  into our digital processing \cite{7919262}. More concretely, for a constrained analog beamformer $\boldsymbol{\hat{W}}_m$, its SVD results in $\boldsymbol{\hat{W}}_m = \boldsymbol{\hat{U}}_m \boldsymbol{\hat{D}}_m \boldsymbol{\hat{V}}_m^{*}$.
The orthogonality compensation matrix, denoted by $\boldsymbol{F}_m$, is defined as
\begin{align}\label{eq:CompMat}
    \boldsymbol{F}_m = \boldsymbol{\hat{V}}_m \boldsymbol{\hat{D}}_m^{-1} \boldsymbol{\hat{V}}_m^{ *}.
\end{align}
Therefore, adding such a compensation matrix allows us to improve the design of the analog matrix exploiting the following proposition.
\begin{prop}\label{prop:WAF}
Assume that instead of using $\boldsymbol{W}_m^{(p)}$ as the analog matrix, $\boldsymbol{W}_m^{(p)} \boldsymbol{A}_m$ is the new analog beamformer with $\boldsymbol{A}_m \in \mathbb{C}^{L \times L}$ nonsingular. The product between $\boldsymbol{W}_m^{(p)} \boldsymbol{A}_m \boldsymbol{F}_m$ provides the same optimality as $\boldsymbol{W}_m^{(p)}$ and therefore $\boldsymbol{W}_m^{(p)} \boldsymbol{A}_m$ is an optimal unconstrained analog matrix.
\end{prop}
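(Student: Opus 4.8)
The plan is to evaluate the product $\boldsymbol{W}_m^{(p)} \boldsymbol{A}_m \boldsymbol{F}_m$ directly from the SVD that defines the compensation matrix and to recognise the result as a semi-unitary matrix, which by the two earlier propositions is simultaneously optimal for the UL-MMSE and DL-RZF metrics. Since $\boldsymbol{A}_m$ is nonsingular and $\boldsymbol{W}_m^{(p)}$ has full column rank, the new analog beamformer has full column rank and admits the thin SVD $\boldsymbol{W}_m^{(p)} \boldsymbol{A}_m = \boldsymbol{\hat{U}}_m \boldsymbol{\hat{D}}_m \boldsymbol{\hat{V}}_m^*$ with invertible diagonal $\boldsymbol{\hat{D}}_m$. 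Substituting the compensation matrix from \eqref{eq:CompMat}, the factors telescope because $\boldsymbol{\hat{V}}_m^* \boldsymbol{\hat{V}}_m = \boldsymbol{I}_L$ and $\boldsymbol{\hat{D}}_m \boldsymbol{\hat{D}}_m^{-1} = \boldsymbol{I}_L$, leaving the polar factor
\begin{align}
    \boldsymbol{W}_m^{(p)} \boldsymbol{A}_m \boldsymbol{F}_m = \boldsymbol{\hat{U}}_m \boldsymbol{\hat{V}}_m^*.
\end{align}

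Next I would check that this residual factor is semi-unitary: as $\boldsymbol{\hat{U}}_m$ has orthonormal columns and $\boldsymbol{\hat{V}}_m$ is square unitary, $(\boldsymbol{\hat{U}}_m \boldsymbol{\hat{V}}_m^*)^* (\boldsymbol{\hat{U}}_m \boldsymbol{\hat{V}}_m^*) = \boldsymbol{\hat{V}}_m \boldsymbol{\hat{U}}_m^* \boldsymbol{\hat{U}}_m \boldsymbol{\hat{V}}_m^* = \boldsymbol{I}_L$. Assembling the per-AP blocks into the block-diagonal $\boldsymbol{W}$, the compensated beamformer is of the form $\boldsymbol{U}$ with trivial right factor $\boldsymbol{Q} = \boldsymbol{I}_{ML}$. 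This $\boldsymbol{Q}$ is at once nonsingular, so Prop. \ref{prop:WMMSE} yields maximum UL-MMSE SINR, and semi-unitary, so Prop. \ref{prop:WRZF} yields maximum DL-RZF SINR. Because $\boldsymbol{W}_m^{(p)}$ is itself semi-unitary and therefore enjoys both properties, the compensated beamformer attains identical optimality, establishing the first claim.

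For the \emph{therefore} part I would argue that $\boldsymbol{F}_m$ is a purely digital, post-analog linear transformation that can be folded into the MMSE combiner (UL) or the RZF precoder (DL). The effective channel presented to the digital stage when the analog matrix is $\boldsymbol{W}_m^{(p)} \boldsymbol{A}_m$ and $\boldsymbol{F}_m$ is appended coincides with that of the semi-unitary matrix $\boldsymbol{\hat{U}}_m \boldsymbol{\hat{V}}_m^*$; since the digital receiver/precoder is invariant to such an invertible change of basis -- the very mechanism proved in Props. \ref{prop:WMMSE}--\ref{prop:WRZF} -- the phase-constrained $\boldsymbol{W}_m^{(p)} \boldsymbol{A}_m$ together with $\boldsymbol{F}_m$ realises the optimal SINR, so $\boldsymbol{W}_m^{(p)} \boldsymbol{A}_m$ qualifies as an optimal unconstrained analog matrix.

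The main obstacle I anticipate is this last bookkeeping step rather than the telescoping algebra, which is immediate. One must track how $\boldsymbol{F}_m$ reshapes the effective-noise covariance $\boldsymbol{\Sigma}_k$ in the UL and the power normalisation $\lambda_k$ in the DL, and confirm that these rescalings are exactly absorbed by the MMSE/RZF structure so that the SINR ratio in \eqref{eq:SINRmmse} or \eqref{eq:SINRRZF} is unchanged. I would argue this invariance cleanly by reducing it to the nonsingular-$\boldsymbol{Q}$ (UL) and semi-unitary-$\boldsymbol{Q}$ (DL) arguments already established, thereby avoiding a fresh SINR computation.
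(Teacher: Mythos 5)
Your proposal is correct and follows essentially the same route as the paper's proof: both telescope the SVD factors through the compensation matrix of Eq.~\eqref{eq:CompMat} to obtain a residual semi-unitary factor (your $\boldsymbol{\hat{U}}_m \boldsymbol{\hat{V}}_m^*$ is exactly the paper's $\boldsymbol{W}_m^{(p)}\boldsymbol{U}_1\boldsymbol{V}_1^*$, since the thin SVD of $\boldsymbol{W}_m^{(p)}\boldsymbol{A}_m$ inherits its left factor from the SVD of $\boldsymbol{A}_m$ when $\boldsymbol{W}_m^{(p)}$ is semi-unitary) and then invoke the invariance established by Props.~\ref{prop:WMMSE} and \ref{prop:WRZF}. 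If anything, your version is slightly more explicit than the paper's about why semi-unitarity suffices for optimality, but the argument is the same.
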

\begin{proof}
The proof can be found in App. \ref{proof:WAF}
\end{proof}

Using the previous proposition, the initial unconstrained beamformer $\boldsymbol{W}_m^{(p)}$ can be replaced by $\boldsymbol{W}_m^{(p)} \boldsymbol{A}_m$ without a performance degradation as long as $\boldsymbol{A}_m$ is nonsingular. As a consequence, we can formulate the following optimization problem:
 \begin{equation}\label{Prob:ProjectW}
    \begin{aligned}
    & \underset{\boldsymbol{\hat{W}}_m, \boldsymbol{A}_m}{\text{min}}
    & & || \boldsymbol{\hat{W}}_m - \boldsymbol{{W}}_m^{(p)} \boldsymbol{A}_m ||_{\text{F}}^2  \\
    & \text{s.t.} & &  | [\boldsymbol{\hat{W}}_m]_{n,l} | = \frac{1}{\sqrt{N}}
    \end{aligned}
\end{equation}
Thanks to the degrees of freedom added by $\boldsymbol{A}_m$, the constrained analog beamformer $\boldsymbol{\hat{W}}_m$, can be made closer to the unconstrained one $\boldsymbol{W}_m^{(p)}$.
By alternating minimization, we split the previous problem into two sub-problems: (\emph{i}) find the optimal $\boldsymbol{A}_m$ for fixed $\boldsymbol{\hat{W}}_m$ and (\emph{ii}) find the optimal $\boldsymbol{\hat{W}}_m$ for fixed $\boldsymbol{A}_m$. The solution to the previous subproblems is
\begin{equation}
    \boldsymbol{A}_m = \boldsymbol{{W}}_m^{(p) \ls *} \boldsymbol{\hat{W}}_m ,
\end{equation}
\begin{align}
    \boldsymbol{\hat{W}}_m = \frac{1}{\sqrt{N}} \text{exp} \angle (\boldsymbol{{W}}_m^{(p)} \boldsymbol{A}_m ).
\end{align}
An iterative process based on the block coordinate descend method follows until convergence is reached \cite{BCD}. 
Therefore, a constrained analog matrix will be obtained and thus from Eq. \eqref{eq:CompMat} we can create  $\boldsymbol{F}_m$ that goes into the baseband (or digital) part. As a consequence, the equivalent channel between AP $m$ and UE $k$ has an extra component:
\begin{align}
    \boldsymbol{g}_{m,k} = \boldsymbol{F}_m^{*} \boldsymbol{\hat{W}}_m^{*} \boldsymbol{h}_{m,k}.
\end{align}


\subsection{Pilot Assignment Optimization}
The optimal solution to \eqref{opt:Problem} with respect to $\boldsymbol{\Phi}$ requires an exhaustive search over the set of possible pilot sequences. However, based on the correlation between effective channels: $\Delta_{k,i} = \text{tr} ( \boldsymbol{\Gamma}_k^{(g)} \boldsymbol{\Gamma}_i^{(g)} )$ for $ k \neq i$, an initial pilot assignment can be made, denoted by $\boldsymbol{\Phi}^{(0)}$. Particularly, a set of users is assigned the same pilot if their normalized cross-correlation, i.e. $\frac{\Delta_{k,i}}{ \text{tr} ( \boldsymbol{\Gamma}_k^{(g)} )  \text{tr} ( \boldsymbol{\Gamma}_i^{(g)} )}$, is minimized. Afterwards, the greedy algorithm proposed in Alg. \ref{alg:pilotassignment} combined with the asymptotic approximations can be used to iteratively update the UE pilot assignment in a max-min SINR sense. Additionally, by construction, Alg. \ref{alg:pilotassignment} converges provided that the cost function (\emph{i})  is non-decreasing and (\emph{ii}) is upper bounded.

\begin{algorithm}
    \caption{Greedy pilot assignment}\label{alg:pilotassignment}
    \begin{algorithmic}
        \Require Set of available pilots, $\mathcal{S} = \{s_1,\dots,s_{|\mathcal{S}|}\}$ and initial pilot assignment $\boldsymbol{\Phi}^{(0)}$ at iteration $j=0$.
        \State Define the cost function $\mu^{(0)} = \min_k \overline{\mathrm{SINR}}_k(\boldsymbol{\Phi}^{(0)})$.
        \While{${\mu^{(j+1)} - \mu^{(j)} \over \mu^{(j)} } > \epsilon $ }
                \State For each UE $u=1,\dots,K$ solve \begin{gather}
                    \phi_u^{(j+1)}  =  \arg \max_{s \in \mathcal{S}} \ls \nonumber \\ \min_k \overline{\mathrm{SINR}}_k(\phi_1^{(j+1)},\dots,\phi_{u-1}^{(j+1)}, s, \phi_{u+1}^{(j)},\dots,\phi_K^{(j)})
                \end{gather}
        \State Update cost function $\mu^{(j+1)} = \min_k \overline{\mathrm{SINR}}_k(\boldsymbol{\Phi}^{(j+1)})$
    \EndWhile
    \end{algorithmic}
\end{algorithm}



\section{$M \to \infty$ Regime}
Finally, we focus on the case where $M\to\infty$. 
For simplicity, assume $\Ms = \boldsymbol{1}$ and recall that a full digital structure is the one providing the best performance in terms of SE, attained when $L=N$ and $\boldsymbol{W}_m = \boldsymbol{I}_{N}$. Then, the following can be derived. 
\begin{prop}\label{prop:UBLBHyb}
Define the gap as the difference in SINR between full digital and hybrid. Then, there exist  lower and upper bounds for the gap,  denoted by $\delta_{\mathrm{LB}}$ and $\delta_{\mathrm{UB}}$, given by
\begin{align}
    \delta_{\mathrm{LB}} = \frac{p_k}{\sigma^2}  \sum \limits_{m =1}^M \sum \limits_{n=L+1}^N \lambda_{m,k}^{(n)}.
\end{align}

\begin{align}
    \delta_{\mathrm{UB}} = \frac{p_k}{\sigma^2}  \sum \limits_{m =1}^M \bigg( \sum \limits_{n=1}^N ( \lambda_{m,k}^{(n)} - \lambda_{m,k}^{(N-L + n)} ) +  \sum \limits_{n=L+1}^M \lambda_{m,k}^{(n)} \bigg)
\end{align}
\end{prop}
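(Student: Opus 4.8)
The plan is to evaluate the asymptotic SINR of Theorem~\ref{Th:T1} in the limit $M\to\infty$ (with $N,K$ fixed), specialize it to the full-digital and hybrid transceivers, and then bound the resulting trace gap by eigenvalue-interlacing arguments. First I would substitute $\boldsymbol{T}_{m,k}$ from \eqref{eq:Tmk} back into \eqref{eq:SINRAppr} to cancel the $|\mathcal{F}_k|N$ factors and obtain $\overline{\mathrm{SINR}}_k = p_k\sum_{m\in\mathcal{F}_k}\mathrm{tr}[\boldsymbol{\Gamma}_{m,k}^{(g)}(\sum_i \tfrac{m_{m,k}\boldsymbol{\Gamma}_{m,i}^{(g)}p_i}{1+e_i}+\boldsymbol{\Sigma}_{m,k})^{-1}]$, and then argue that each fixed-point coefficient $e_i$ of \eqref{eq:coeffek} grows as $\Theta(M)$: the matrix inverted there is $ML\times ML$ block diagonal with per-AP blocks bounded away from singularity, so its trace against $\boldsymbol{\Gamma}_i^{(g)}$ accumulates $M$ bounded per-AP contributions. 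Consequently $1/(1+e_i)=\Theta(1/M)\to0$, the multiuser interference term vanishes, and the resolvent collapses to $\boldsymbol{\Sigma}_{m,k}^{-1}$.

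Second, under the perfect-CSI setting used for Props.~\ref{prop:WMMSE}--\ref{prop:WRZF} we have $\boldsymbol{C}_{m,i}^{(g)}=\boldsymbol{0}$, hence $\boldsymbol{\Gamma}_{m,k}^{(g)}=\boldsymbol{W}_m^*\boldsymbol{R}_{m,k}\boldsymbol{W}_m$ and $\boldsymbol{\Sigma}_{m,k}=\sigma^2\boldsymbol{W}_m^*\boldsymbol{W}_m$, so the limit reads $\overline{\mathrm{SINR}}_k\to\frac{p_k}{\sigma^2}\sum_m\mathrm{tr}[(\boldsymbol{W}_m^*\boldsymbol{W}_m)^{-1}\boldsymbol{W}_m^*\boldsymbol{R}_{m,k}\boldsymbol{W}_m]$. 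Setting $\boldsymbol{W}_m=\boldsymbol{I}_N$ recovers the full-digital value $\frac{p_k}{\sigma^2}\sum_m\sum_{n=1}^N\lambda_{m,k}^{(n)}$, so the gap reduces to a sum over APs of $\mathrm{tr}(\boldsymbol{R}_{m,k})$ minus the captured signal power of each hybrid front-end.

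Third, for the lower bound I would invoke the Poincar\'{e} separation theorem (equivalently the Ky Fan maximum principle): over all semi-unitary $\boldsymbol{W}_m$, $\mathrm{tr}(\boldsymbol{W}_m^*\boldsymbol{R}_{m,k}\boldsymbol{W}_m)\le\sum_{n=1}^L\lambda_{m,k}^{(n)}$, with equality when the columns span the top-$L$ eigenspace of $\boldsymbol{R}_{m,k}$. Subtracting from $\mathrm{tr}(\boldsymbol{R}_{m,k})$ and summing over $m$ gives the minimum attainable gap $\delta_{\mathrm{LB}}=\frac{p_k}{\sigma^2}\sum_m\sum_{n=L+1}^N\lambda_{m,k}^{(n)}$, i.e., precisely the $N-L$ smallest eigenvalues that a rank-$L$ front-end cannot reach.

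Finally, the upper bound is the crux and I expect it to be the main obstacle. Because the realized beamformer obeys the unit-modulus constraint, even after the projection $\mathcal{P}(\cdot)$ and the compensation matrix $\boldsymbol{F}_m$ of \eqref{eq:CompMat}, it is not exactly semi-unitary, so the Ky Fan bound cannot be applied verbatim; the eigenspace mismatch and the conditioning of $\boldsymbol{W}_m^*\boldsymbol{W}_m$ must be controlled simultaneously. I would split the loss into (i) the worst-case selection loss of a semi-unitary projection, which by the lower Poincar\'{e} inequality $\mathrm{tr}(\boldsymbol{W}_m^*\boldsymbol{R}_{m,k}\boldsymbol{W}_m)\ge\sum_{n=N-L+1}^N\lambda_{m,k}^{(n)}$ costs at most the $N-L$ largest eigenvalues $\sum_{n=1}^{N-L}\lambda_{m,k}^{(n)}=\sum_{n=1}^N(\lambda_{m,k}^{(n)}-\lambda_{m,k}^{(N-L+n)})$ with the convention $\lambda_{m,k}^{(j)}:=0$ for $j>N$, and (ii) the additional penalty from the non-orthogonality of the columns, which I would bound by the same tail $\sum_{n=L+1}^N\lambda_{m,k}^{(n)}$ already present in $\delta_{\mathrm{LB}}$. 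Demonstrating that these two effects superpose additively rather than compounding, and verifying that the non-orthogonality penalty is genuinely governed by the tail eigenvalues, is the delicate step; summing over $m$ then yields $\delta_{\mathrm{UB}}$. The DL RZF case follows along the same lines by taking the analogous $M\to\infty$ limit of Theorem~\ref{Th:T2}, where $1/(1+e_i)\to0$ leaves $\boldsymbol{T}\to MN(\rho\boldsymbol{I}_{ML})^{-1}$ and the SINR again reduces to a noise-limited, eigenvalue-only expression.
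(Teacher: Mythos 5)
Your reduction to a noise-limited, per-AP trace expression and your lower bound are sound, although the paper gets there by a more elementary route: it assumes perfect CSI and maximum ratio combining ($\boldsymbol{v}_k=\boldsymbol{g}_k$) and lets $M\to\infty$ so that $\boldsymbol{g}_k^*\boldsymbol{g}_i\to 0$ (favorable propagation), which immediately yields $\overline{\mathrm{SINR}}_k=\frac{p_k}{\sigma^2}\sum_m\mathrm{tr}[\boldsymbol{W}_m^*\boldsymbol{R}_{m,k}\boldsymbol{W}_m]$ without any analysis of the fixed-point coefficients of Theorem~\ref{Th:T1}. Your Ky Fan argument for $\delta_{\mathrm{LB}}$ coincides with the paper's, which is the equality case of Cauchy interlacing.

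The genuine gap is in your upper bound. The paper never leaves the semi-unitary class: it completes $\boldsymbol{W}_m$ to a unitary $\boldsymbol{W}_m^{(u)}=[\boldsymbol{W}_m\ \boldsymbol{W}_{m,0}]$, observes that $\boldsymbol{W}_m^*\boldsymbol{R}_{m,k}\boldsymbol{W}_m$ is an $L\times L$ principal compression of $\boldsymbol{R}_{m,k}$, and applies Cauchy interlacing $\lambda_{m,k}^{(i)}\ge\mu_{m,k}^{(i)}\ge\lambda_{m,k}^{(N-L+i)}$; the upper bound is simply the worst case $\mu_{m,k}^{(i)}=\lambda_{m,k}^{(N-L+i)}$, giving $\delta\le\frac{p_k}{\sigma^2}\sum_m\big(\sum_{n=1}^{L}(\lambda_{m,k}^{(n)}-\lambda_{m,k}^{(N-L+n)})+\sum_{n=L+1}^{N}\lambda_{m,k}^{(n)}\big)$. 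There is no ``non-orthogonality penalty'' in the paper: the unit-modulus constraint and the compensation matrix $\boldsymbol{F}_m$ play no role in this proposition, and the second sum is not an extra additive loss but part of an algebraic rewriting --- the whole bound collapses to $\frac{p_k}{\sigma^2}\sum_m\sum_{n=1}^{N-L}\lambda_{m,k}^{(n)}$, the $N-L$ largest eigenvalues. Your part (i) alone already equals this quantity, so stacking part (ii) on top double-counts and would produce a strictly looser inequality whose key step (the ``additive superposition'') you correctly identify as unproved. You appear to have been misled by index typos in the statement: the appendix derivation has $\sum_{n=1}^{L}$ (not $\sum_{n=1}^{N}$) in the first sum and $\sum_{n=L+1}^{N}$ (not $\sum_{n=L+1}^{M}$) in the second. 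The correct repair is to read the upper bound as the other end of the interlacing inequality, not to introduce a second loss mechanism.
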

\begin{proof}
The proof can be found in App. \ref{proof:UBLBHyb}
\end{proof}
Note that if the channel matrices are rank-deficient, i.e. $\mathrm{rank}( \boldsymbol{R}_{m,k}) \leq L$, the gap can be as small as zero and therefore a hybrid structure would achieve the same performance as digital.

\section{Simulation Results}
For the purpose of performance evaluation, we consider a $200 \times 200$ $m^2$ wrapped around universe. To generate the channel model, we assume that the APs are deployed in urban environments at around 10 m, matching with the 3GPP Urban Microcell model in \cite[Table B.1.2.1-1]{3GPPCh} at an operating frequency of 2 GHz. The shadowing terms given an AP to different UEs present a certain correlation, given by the model in  \cite[Table B.1.2.2.1-4]{3GPPCh}. The number of total channel uses is $\tau_c = 200$. Unless otherwise specified, in order to take into account the effects of pilot contamination $\tau = 8$ orthogonal pilots and $K = 16$ UEs (i.e. reuse factor of two). Additionally,  each AP has $N=32$ antennas. The UE transmit power is set to $200$ mW, $\sigma^2 = -96 $ dBm and $\rho = 10^{-4}$. Moreover, to account for scalability, the  $[m,k]$ entry of $\boldsymbol{M}_{[m,k]}$ is 1 if $d_{m,k} \leq R_{\text{max}}$ for $R_{\text{max}}=90$ m, which ensures connectivity to multiple FBSs per GU for $d_{m,k}$ the Euclidean distance between AP $m$ and UE $k$. Finally, $\epsilon = 0.001$ to ensure enough iterations until convergence is reached.

The applicability of Theorems \ref{Th:T1} and \ref{Th:T2} to finite-dimensional systems is first verified in Figs. \ref{fig:ULMMSE} and \ref{fig:DLRZF}, where the approximations are denoted by RMT in the legend.  For different network setups, corresponding to $M=4$, $N=32$, $L=16$ and $M=12$, $N=32$, $L=8$, the approximations obtained in Th. \ref{Th:T1} and \ref{Th:T2} respectively are indeed accurate for $K=16$ and $\tau=8$ orthogonal pilots.

\begin{figure}
    \centering\includegraphics[scale=0.6]{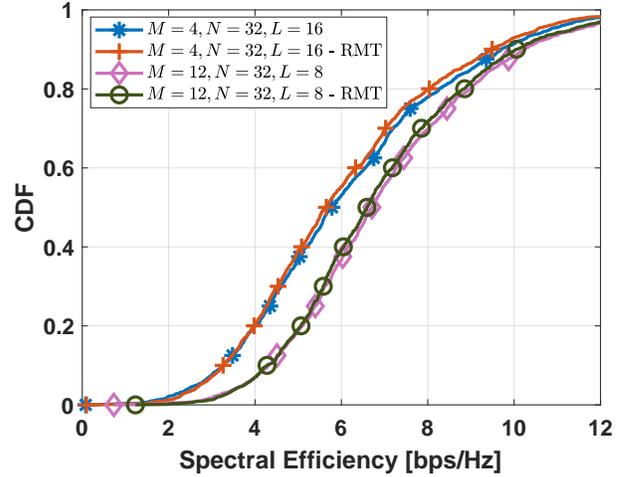}
    \caption{Exact SE vs $\left(1 - {\tau}/{\tau_c} \right)   \log_2 ( 1 + \overline{\mathrm{SINR}}_k )$ with $\overline{\mathrm{SINR}}_k$ given in Th. \ref{Th:T1}.}
    \label{fig:ULMMSE}
\end{figure}

\begin{figure}
    \centering\includegraphics[scale=0.6]{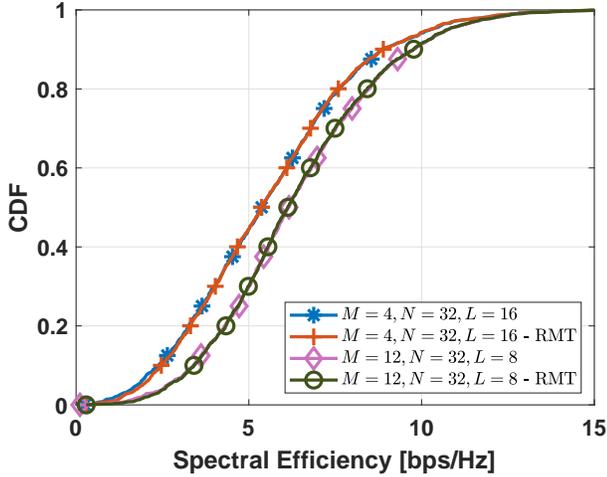}
    \caption{Exact SE vs $\left(1 - {\tau}/{\tau_c} \right)   \log_2 ( 1 + \overline{\mathrm{SINR}}_k )$ with $\overline{\mathrm{SINR}}_k$ given in Th. \ref{Th:T2}.}
    \label{fig:DLRZF}
\end{figure}

In Fig. \ref{fig:Power}, we compare the UL pilot assignment obtained by Alg. \ref{alg:pilotassignment} (Greedy) and a random assignment (RA) for different values of $N$ and $L$. For $N=L=16$ we assume a digital structure while for $N=32$ and $L=8$ the analog matrices $\boldsymbol{\hat{W}}_m$ are obtained as described in Section \ref{Sec:BFMat}. There is a visible improvement after running the greedy algorithm when the set of available pilots $\mathcal{S}$ is composed by orthogonal pilots. Additionally, the improvement in terms of minimum SE is measured and is of about 60\% and 90\% for $N=L=16$ and $N=32$, $L=8$, respectively. Similar results are obtained in the DL.

\begin{figure}
    \centering\includegraphics[scale=0.6]{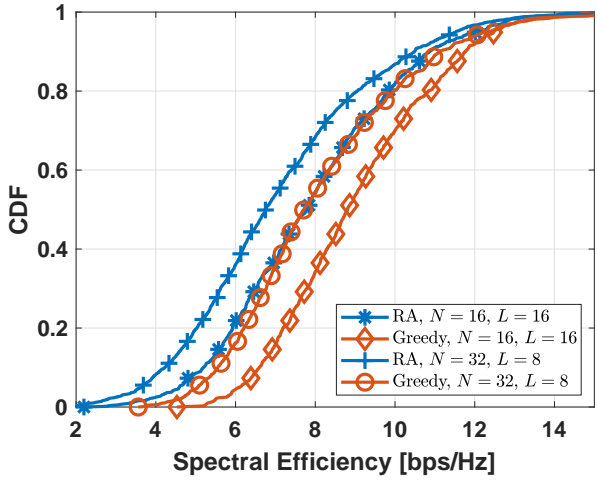}
    \caption{SE for the Greedy and RA pilot assignment schemes.}
    \label{fig:Power}
\end{figure}

Next, we analyze the performance of our hybrid beamforming method compared to the two existing techniques, called SVD \cite{8678745} and SLNR \cite{7919262}. We measure the 95\% outage SE which is a key metric in wireless systems for both the UL and DL in Figs. \ref{fig:ULW} and \ref{fig:DLW}. Clearly, our method outperforms both works in the two links, i.e. UL and DL, with gains in the range of 1-8\% and 10-35\% in the UL and DL, respectively.

\begin{figure}
    \centering\includegraphics[scale=0.6]{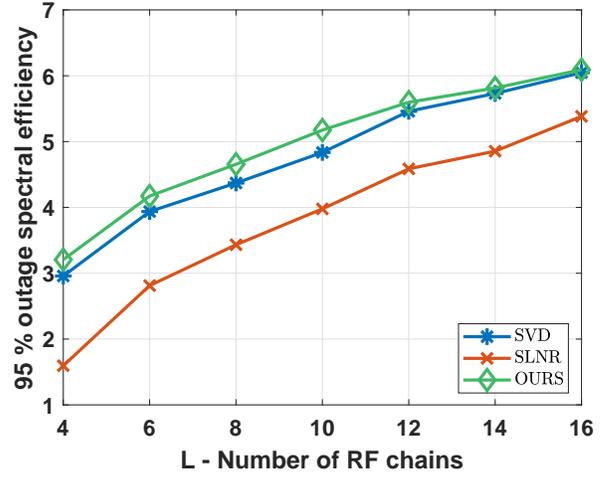}
    \caption{95\% outage UL-SE for different analog methods as a function of $L$ for $M=12$ and $N=32$.}
    \label{fig:ULW}
\end{figure}

\begin{figure}
    \centering\includegraphics[scale=0.6]{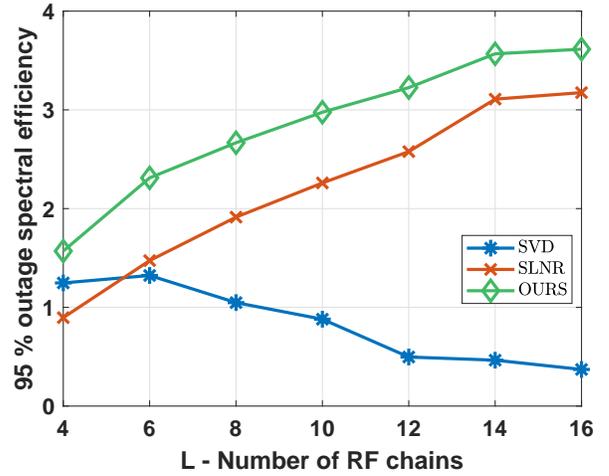}
    \caption{95\% outage DL-SE for different analog methods as a function of $L$ for $M=12$ and $N=32$.}
    \label{fig:DLW}
\end{figure}

\section{Conclusions}
This paper has investigated the use of hybrid transceivers in CF MIMO setups. After deriving asymptotic approximations for both UL and DL, we focused on solving two problems: (\emph{i}) analog beamformer and (\emph{ii}) pilot assignment. The solution to the first one is shown to outperform state-of-the-art techniques while the greedy pilot assignment highly outperforms a RA. Finally, theoretical bounds for the gap between full digital and hybrid structures are presented, showing that such a gap is highly dependant on the eigenvalues of the channel correlation matrices.

\bibliography{references}
\bibliographystyle{ieeetr}

\appendices
\section{}
\begin{theorem}\label{th:RMT1}
(\cite[Theorem 1]{6172680}) Let $\boldsymbol{{D}} \in \mathbb{C}^{M \times M}$ and $\boldsymbol{{S}} \in \mathbb{C}^{M \times M}$ be Hermitian nonnegative-definite while  $\boldsymbol{{H}} \in \mathbb{C}^{M \times K}$ is a random matrix with zero-mean independent column vectors, $\boldsymbol{h}_k$, each with covariance matrix $\frac{1}{M}\boldsymbol{\mathrm{R}}_k$. Finally, $\boldsymbol{{D}}$  and $\boldsymbol{{R}}_k$ have uniformly bounded spectral norm w.r.t. $M$. For $z>0$ and $M,K \to \infty$,
\begin{equation*}
    \frac{1}{M} \, \mathrm{tr} \! \left[ \boldsymbol{{D}} \big( \boldsymbol{{H}}\boldsymbol{{H}}^{*} + \boldsymbol{{S}} + z\boldsymbol{{I}}_M)^{-1} \right] - \frac{1}{M} \, \mathrm{tr}[ \boldsymbol{{D}} \Tmat] \stackrel{\text{a.s.}}{\to} 0 ,
\end{equation*}
where 
\begin{equation}\label{eq:Tmat}
    \Tmat = \bigg( \frac{1}{M} \sum \limits_{j=1}^K \frac{\boldsymbol{{R}}_j}{1+e_{j}}  + \boldsymbol{{S}} + z\boldsymbol{{I}}_M \bigg)^{\!-1}
\end{equation}
with coefficients $e_k = \text{lim}_{n\xrightarrow{}\infty} e_k^{(n)}$ for
\begin{equation}
    e_k^{(n)} =  \frac{1}{M} \, \mathrm{tr} \! \left[ \boldsymbol{{R}}_k \bigg( \frac{1}{M} \sum \limits_{j=1}^K \frac{\boldsymbol{{R}}_j}{1+e_{j}^{(n-1)}}  + \boldsymbol{{S}} + z\boldsymbol{{I}}_M \bigg)^{\!-1} \right]
\end{equation}
with initial values $e_k^{(0)}=M$.
\end{theorem}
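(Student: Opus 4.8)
The plan is to establish this as a \emph{deterministic equivalent} via the standard resolvent (Stieltjes-transform) machinery for random matrices with independent, non-identically distributed columns. Write $\boldsymbol{Q} = (\boldsymbol{H}\boldsymbol{H}^* + \boldsymbol{S} + z\boldsymbol{I}_M)^{-1}$ for the resolvent; the goal is $\frac{1}{M}\mathrm{tr}[\boldsymbol{D}(\boldsymbol{Q} - \Tmat)] \stackrel{\mathrm{a.s.}}{\to} 0$. Before anything else I would settle the deterministic side: show the system $e_k = \frac{1}{M}\mathrm{tr}[\boldsymbol{R}_k \Tmat]$, with $\Tmat$ as in \eqref{eq:Tmat}, admits a unique solution with all $e_k > 0$, and that the iteration started at $e_k^{(0)} = M$ converges to it. This follows from Yates' framework: the map sending $(e_j^{(n-1)})_j$ to $(e_k^{(n)})_k$ is a \emph{standard interference function} (positive, monotone, and scalable), which guarantees a unique fixed point and convergence of the iteration from any nonnegative start. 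The uniform spectral-norm bounds on $\boldsymbol{D}$ and the $\boldsymbol{R}_k$, together with $z > 0$ (so $\|\boldsymbol{Q}\| \le 1/z$ and $\|\Tmat\| \le 1/z$), keep every quantity bounded throughout.

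Next I would expand the difference using the resolvent identity $\boldsymbol{Q} - \Tmat = \Tmat(\Tmat^{-1} - \boldsymbol{Q}^{-1})\boldsymbol{Q}$, where $\Tmat^{-1} - \boldsymbol{Q}^{-1} = \frac{1}{M}\sum_{j} \frac{\boldsymbol{R}_j}{1 + e_j} - \sum_{j} \boldsymbol{h}_j \boldsymbol{h}_j^*$. Taking the trace against $\boldsymbol{D}$ splits the error into a sum over the $K$ columns, each term of the form $\frac{1}{M^2(1+e_j)}\mathrm{tr}[\boldsymbol{D}\Tmat\boldsymbol{R}_j\boldsymbol{Q}] - \frac{1}{M}\boldsymbol{h}_j^* \boldsymbol{Q}\boldsymbol{D}\Tmat\boldsymbol{h}_j$. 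The standard device is the \emph{leave-one-out} resolvent $\boldsymbol{Q}_j = (\boldsymbol{H}\boldsymbol{H}^* - \boldsymbol{h}_j\boldsymbol{h}_j^* + \boldsymbol{S} + z\boldsymbol{I}_M)^{-1}$, which is independent of $\boldsymbol{h}_j$; by Sherman--Morrison, $\boldsymbol{Q}\boldsymbol{h}_j = \boldsymbol{Q}_j \boldsymbol{h}_j /(1 + \boldsymbol{h}_j^* \boldsymbol{Q}_j \boldsymbol{h}_j)$, which is what produces the $1/(1+e_j)$ weighting once the quadratic forms are replaced by their deterministic limits.

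The two workhorse lemmas are then: (i) the trace lemma, $\boldsymbol{h}_j^* \boldsymbol{B} \boldsymbol{h}_j - \frac{1}{M}\mathrm{tr}[\boldsymbol{R}_j \boldsymbol{B}] \stackrel{\mathrm{a.s.}}{\to} 0$ for any $\boldsymbol{B}$ of bounded norm independent of $\boldsymbol{h}_j$ (valid under the requisite moment conditions on the column entries); and (ii) the rank-one perturbation bound $|\mathrm{tr}[(\boldsymbol{Q} - \boldsymbol{Q}_j)\boldsymbol{A}]| \le \|\boldsymbol{A}\|/z$, which lets me swap $\boldsymbol{Q}_j$ for $\boldsymbol{Q}$ inside traces at no asymptotic cost. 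Applying (i) to $\boldsymbol{h}_j^* \boldsymbol{Q}_j \boldsymbol{h}_j$ and using (ii) shows $\boldsymbol{h}_j^* \boldsymbol{Q}_j \boldsymbol{h}_j$ concentrates near $\frac{1}{M}\mathrm{tr}[\boldsymbol{R}_j \boldsymbol{Q}]$, and a self-consistent argument closes the loop by identifying $\frac{1}{M}\mathrm{tr}[\boldsymbol{R}_j\boldsymbol{Q}]$ with the fixed-point value $e_j$. Substituting back term-by-term collapses the sum to $0$ in the limit. Finally I would upgrade convergence in probability to almost-sure convergence through a martingale-difference decomposition of $\frac{1}{M}\mathrm{tr}[\boldsymbol{D}\boldsymbol{Q}]$ with respect to the filtration that reveals the columns $\boldsymbol{h}_1,\dots,\boldsymbol{h}_K$ one at a time, bounding each increment via the rank-one identity and invoking the Burkholder/Azuma inequality together with Borel--Cantelli.

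The hard part will be closing the fixed-point loop \emph{uniformly} over the growing number $K$ of coupled equations: the quantities $\frac{1}{M}\mathrm{tr}[\boldsymbol{R}_j\boldsymbol{Q}]$ must all concentrate around their deterministic targets $e_j$ simultaneously, and the per-column errors must remain summable. This requires a stability/uniqueness estimate for the implicit map $(e_j)_j \mapsto (e_j)_j$ --- showing it is a strict contraction in a suitable weighted metric --- so that the accumulated random perturbations do not blow up when propagated through the self-consistent system. Establishing this contraction, and with it the uniform control of the $K$ coupled quadratic forms, is the technical crux; the remaining steps (resolvent expansions and moment bounds) are routine.
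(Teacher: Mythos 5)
This statement is not proved in the paper at all: it is imported verbatim as \cite[Theorem 1]{6172680}, so there is no in-paper argument to compare against. Your outline is, in substance, the canonical proof of that cited result (the Bai--Silverstein/Hoydis--Couillet--Debbah deterministic-equivalent machinery): resolvent identity $\boldsymbol{Q}-\Tmat=\Tmat(\Tmat^{-1}-\boldsymbol{Q}^{-1})\boldsymbol{Q}$, leave-one-out resolvents with Sherman--Morrison to generate the $1/(1+e_j)$ weights, the trace and rank-one perturbation lemmas, a martingale-difference decomposition with Burkholder/Borel--Cantelli for almost-sure convergence, and existence/uniqueness/convergence of the fixed point via the standard-interference-function (or monotone-iteration) framework --- the scalability check indeed goes through because $z>0$ makes $\boldsymbol{S}+z\boldsymbol{I}_M$ strictly positive. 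The crux you flag (uniform concentration of the $K$ coupled quadratic forms and stability of the implicit system) is exactly where the work lies in the reference, and is resolved there by the contraction/stability estimate you describe. In short: correct approach, consistent with the source the paper relies on; nothing to add beyond the routine moment conditions on the columns, which are automatic for the Gaussian channels used in this paper.
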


\section{}
\begin{theorem}\label{th:RMT2}
(\cite[Theorem 2]{6172680}) Let $\boldsymbol{\Phi} \in \mathbb{C}^{M \times M}$ be Hermitian nonnegative-definite. Under the same conditions as Th. \ref{th:RMT1}, for $M,K \to \infty$,
\begin{align*}
    \frac{1}{M} \, \mathrm{tr} \! \left[ \boldsymbol{{D}} \big( \boldsymbol{{H}}\boldsymbol{{H}}^{*} + \boldsymbol{{S}} + z\boldsymbol{{I}}_M)^{-1} \boldsymbol{\Phi} \big( \boldsymbol{{H}}\boldsymbol{{H}}^{*} + \boldsymbol{{S}} + z\boldsymbol{{I}}_M)^{-1} \right] - \nonumber \\ \frac{1}{M} \, \mathrm{tr}[ \boldsymbol{{D}} \boldsymbol{T}^{'}(z,\boldsymbol{\Phi})] \stackrel{\text{a.s.}}{\to} 0 ,
\end{align*}
where $\boldsymbol{T}^{'}(z,\boldsymbol{\Phi})$ is defined as
\begin{equation}\label{eq:matTprime}
    \boldsymbol{T}^{'}(z,\boldsymbol{\Phi}) = \Tmat \boldsymbol{\Phi} \Tmat + \Tmat \frac{1}{M}\sum \limits_{k=1}^K \frac{\boldsymbol{R}_k e_k^{'}(z,\boldsymbol{\Phi})}{(1+e_k)^2} \Tmat
\end{equation}
with  $\Tmat$ and $e_k$ given in Th. \ref{th:RMT1} for particular $z$ and $\boldsymbol{e}^{'}(z,\boldsymbol{\Phi}) = \big(e_1^{'}(z),\dots,e_K^{'}(z)\big)$ calculated as
\begin{align}
    \boldsymbol{e}^{'}(z,\boldsymbol{\Phi}) = \big( \boldsymbol{I} - \boldsymbol{J}(z) \big)^{-1}\boldsymbol{v}(z,\boldsymbol{\Phi})
\end{align}
with $\boldsymbol{J}(z) \in \mathbb{C}^{K \times K}$ and $\boldsymbol{v}(z) \in \mathbb{C}^{K \times 1}$ defined as
\begin{align}
    \big( \boldsymbol{J}(z) \big)_{k,l} =  \frac{\frac{1}{M} \mathrm{tr} \big[ \boldsymbol{R}_k \Tmat \boldsymbol{R}_l \Tmat \big] }{M(1 + e_l)^2}
\end{align}
and
\begin{align}
    \big( \boldsymbol{v}(z,\boldsymbol{\Phi}) \big)_{k} = \frac{1}{M} \mathrm{tr} \big[ \boldsymbol{R}_k \Tmat  \boldsymbol{\Phi} \Tmat \big]
\end{align}
    
\end{theorem}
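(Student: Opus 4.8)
The plan is to recognize the quadratic trace functional as a directional derivative of the linear functional already controlled by Theorem~\ref{th:RMT1}, and then transfer the almost-sure convergence from the function to its derivative by a complex-analytic argument. Concretely, introduce the perturbed matrix $\boldsymbol{S}_t = \boldsymbol{S} + t\boldsymbol{\Phi}$, which remains Hermitian nonnegative-definite for $t \ge 0$ since $\boldsymbol{\Phi}$ is nonnegative-definite. Writing $\boldsymbol{Q}(t) = (\boldsymbol{H}\boldsymbol{H}^* + \boldsymbol{S}_t + z\boldsymbol{I}_M)^{-1}$ and differentiating the resolvent, one has $\tfrac{\partial}{\partial t}\boldsymbol{Q}(t) = -\boldsymbol{Q}(t)\boldsymbol{\Phi}\boldsymbol{Q}(t)$, so that
\[
-\frac{\partial}{\partial t}\frac{1}{M}\mathrm{tr}[\boldsymbol{D}\boldsymbol{Q}(t)]\Big|_{t=0} = \frac{1}{M}\mathrm{tr}[\boldsymbol{D}\boldsymbol{Q}(0)\boldsymbol{\Phi}\boldsymbol{Q}(0)],
\]
which is exactly the random quantity appearing in the theorem. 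The deterministic counterpart $\boldsymbol{T}^{'}(z,\boldsymbol{\Phi})$ will be identified as $-\tfrac{\partial}{\partial t}\boldsymbol{T}_t|_{t=0}$, where $\boldsymbol{T}_t$ denotes the deterministic equivalent supplied by Theorem~\ref{th:RMT1} for the perturbed problem with $\boldsymbol{S}_t$ in place of $\boldsymbol{S}$.

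First I would apply Theorem~\ref{th:RMT1} verbatim with $\boldsymbol{S}_t$ substituted for $\boldsymbol{S}$, obtaining, for each fixed $t$, the pointwise statement $g_M(t) := \tfrac{1}{M}\mathrm{tr}[\boldsymbol{D}\boldsymbol{Q}(t)] - \tfrac{1}{M}\mathrm{tr}[\boldsymbol{D}\boldsymbol{T}_t] \xrightarrow{\text{a.s.}} 0$. The crucial step is to pass from convergence of $g_M(t)$ to convergence of $g_M^{'}(0)$. For this I would extend $t$ to a complex neighbourhood of the origin on which $\boldsymbol{H}\boldsymbol{H}^* + \boldsymbol{S}_t + z\boldsymbol{I}_M$ stays invertible; since $z>0$ regularizes the resolvent, $g_M$ is holomorphic there and uniformly bounded, using $\|\boldsymbol{Q}(t)\| \le C/z$ together with the uniformly bounded spectral norm of $\boldsymbol{D}$. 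Vitali's convergence theorem (equivalently, the Cauchy integral representation of the derivative as a contour integral of $g_M$) then upgrades the pointwise almost-sure convergence on a countable dense set to locally uniform convergence, whence $g_M^{'}(0) \xrightarrow{\text{a.s.}} 0$, which is the assertion once the derivatives are matched.

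It remains to compute $-\tfrac{\partial}{\partial t}\boldsymbol{T}_t|_{t=0}$ explicitly and check that it equals the stated $\boldsymbol{T}^{'}(z,\boldsymbol{\Phi})$. Differentiating the fixed-point definition $\boldsymbol{T}_t = \big(\tfrac{1}{M}\sum_j \tfrac{\boldsymbol{R}_j}{1+e_j(t)} + \boldsymbol{S}_t + z\boldsymbol{I}_M\big)^{-1}$ and setting $e_j^{'} := -\tfrac{\partial e_j}{\partial t}|_{t=0}$ yields
\[
-\frac{\partial \boldsymbol{T}_t}{\partial t}\Big|_{t=0} = \boldsymbol{T}\boldsymbol{\Phi}\boldsymbol{T} + \boldsymbol{T}\,\frac{1}{M}\sum_j \frac{\boldsymbol{R}_j\, e_j^{'}}{(1+e_j)^2}\,\boldsymbol{T},
\]
matching \eqref{eq:matTprime}. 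Differentiating the self-consistent equations $e_k(t) = \tfrac{1}{M}\mathrm{tr}[\boldsymbol{R}_k\boldsymbol{T}_t]$ and substituting this expression produces the linear system $e_k^{'} = v_k + \sum_l J_{k,l}\, e_l^{'}$, that is $(\boldsymbol{I}-\boldsymbol{J})\boldsymbol{e}^{'} = \boldsymbol{v}$, with $\boldsymbol{J}$ and $\boldsymbol{v}$ exactly as defined in the statement.

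The main obstacle is the limit–derivative interchange in the second step: the analyticity and local-boundedness hypotheses of Vitali's theorem must be set up with care (the admissible complex domain for $t$ and a uniform resolvent bound), and one must verify that $\boldsymbol{I}-\boldsymbol{J}$ is invertible so that $\boldsymbol{e}^{'}$ is well defined. The latter follows from the spectral radius of $\boldsymbol{J}$ being strictly below one, which is the same contraction property that underlies the uniqueness of the coefficients $e_k$ in Theorem~\ref{th:RMT1}.
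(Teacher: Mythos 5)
You should note at the outset that the paper itself contains no proof of Theorem~\ref{th:RMT2}: it is quoted verbatim from \cite[Theorem 2]{6172680}, so the only meaningful comparison is with the proof in that reference. Your proposal is essentially that proof: realizing $\frac{1}{M}\mathrm{tr}\big[\boldsymbol{D}\boldsymbol{Q}\boldsymbol{\Phi}\boldsymbol{Q}\big]$ as $-\partial_t$ at $t=0$ of the linear functional of Theorem~\ref{th:RMT1} with $\boldsymbol{S}$ perturbed to $\boldsymbol{S}+t\boldsymbol{\Phi}$, transferring the almost-sure convergence to the derivative via holomorphy in $t$ and the uniform resolvent bound $\|\boldsymbol{Q}(t)\|\le C/z$ (Vitali), and then differentiating the fixed-point equations — your computations correctly reproduce $\boldsymbol{T}^{'}(z,\boldsymbol{\Phi})$ in \eqref{eq:matTprime} and the system $(\boldsymbol{I}-\boldsymbol{J})\boldsymbol{e}^{'}=\boldsymbol{v}$, and the two points you assert rather than prove (analytic continuation and local boundedness of the deterministic equivalent $\boldsymbol{T}_t$, hence of the $e_j(t)$, for complex $t$ near $0$, and the spectral-radius bound $\rho(\boldsymbol{J})<1$ making $\boldsymbol{I}-\boldsymbol{J}$ invertible) are exactly the auxiliary lemmas invoked in \cite{6172680}, so the sketch is faithful rather than deficient.
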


\section{Proof of Th. \ref{Th:T1}}\label{proof:SINRAssym}
Let us define matrices  $\boldsymbol{P} = \text{diag}\{ p_1,\dots,p_K \}$, 
\begin{align}
    \boldsymbol{\Omega} =  |\mathcal{F}_k|N \left( \big( \boldsymbol{{M}}^{(s)} \circ  \Gmh_k \big) \Pmat{} \big( \Ms{} \circ \Gmh_k  \big)^*  + \boldsymbol{\Sigma}_k \right)^{\!-1},
\end{align}
\begin{align}
    \boldsymbol{\Omega}_k & = \bigg( \big( \boldsymbol{{M}}^{(s)} \circ  \Gmh_k \big) \Pmat{} \big( \Ms{} \circ \Gmh_k  \big)^* - \nonumber \\&  \mspace{22mu} \big( \boldsymbol{m}_k^{(s)} \circ  \boldsymbol{\hat{g}}_k \big) \big(\boldsymbol{m}_k^{(s)} \circ  \boldsymbol{\hat{g}}_k \big)^*p_k + \boldsymbol{\Sigma}_k \bigg)^{\!-1},
\end{align} 
and $\boldsymbol{\Omega}_k^{'} = |\mathcal{F}_k|N \boldsymbol{\Omega}_k$. Then, \eqref{eq:SINRmmse} can be written as
\begin{align}
    \mathrm{SINR}_k & =  \boldsymbol{\hat{g}}_k  ^* \boldsymbol{\Omega}_k  \boldsymbol{\hat{g}}_k   \, p_k \\
    & = \frac{p_k}{|\mathcal{F}_k|N} \, \mathrm{tr} \! \left[  \boldsymbol{\hat{g}}_k   \boldsymbol{\hat{g}}_k^* \boldsymbol{\Omega}_k^{'} \right].
\end{align}
For $|\mathcal{F}_k|N$,$|\mathcal{U}_m|$ $\xrightarrow{}\infty$ $\forall \ls k,m$, we have
\begin{align}\label{eq:LargeMKSINR}
    \frac{p_k}{|\mathcal{F}_k|N} \mathrm{tr} \bigg[  \boldsymbol{\hat{g}}_k   \boldsymbol{\hat{g}}_k^* \boldsymbol{\Omega}_k^{'} \bigg] & \stackrel{\text{(a)}}{\approx}
    \frac{p_k}{|\mathcal{F}_k|N} \mathrm{tr} \big[  \boldsymbol{\Gamma}_k^{(g)} \boldsymbol{\Omega} \big] \\
     & \stackrel{\text{(b)}}{\approx}
    \frac{p_k}{|\mathcal{F}_k|N} \mathrm{tr} \big[ \boldsymbol{\Gamma}_k^{(g)} \Tmat_k \big]. \label{eq:SINRAssyMMSE}
\end{align}
where (a) follows from \cite[Lemmas 4 and 6]{6172680} and (b) is obtained after applying Th. \ref{th:RMT1} by substituting $\boldsymbol{{D}}= \boldsymbol{\Gamma}_k^{(g)} \, p_k$, (\emph{ii}) $\boldsymbol{{R}}_j=\boldsymbol{\Gamma}_j^{(g)} \, p_k$, and (\emph{iii}) $\boldsymbol{{S}} + z\boldsymbol{{I}}_M=\frac{1}{|\mathcal{F}_k|N} \boldsymbol{\Sigma}_k$ while $\Tmat_k$ is defined next
\begin{equation}
    \Tmat_k = \bigg( \frac{1}{|\mathcal{F}_k|N} \sum \limits_{i = 1 }^K \frac{\boldsymbol{\Gamma}_i^{(g)}}{1+e_{i}} \, p_i  + \frac{1}{|\mathcal{F}_k|N} \boldsymbol{\Sigma}_k \bigg)^{\!-1}.
\end{equation}
The necessary coefficients can be calculated as $e_{j} = \lim_{n \to \infty} e_{j}^{(n)} $ with
\begin{align}
    e_{j}^{(n)} & = \frac{p_j}{|\mathcal{F}_j|N} \, \mathrm{tr} \Bigg[ \boldsymbol{\Gamma}_j^{(g)} \bigg( \frac{1}{|\mathcal{F}_j|N} \sum \limits_{i = 1 }^K \frac{\boldsymbol{\Gamma}_i^{(g)}}{1+e_{i}} \, p_i  + \frac{1}{|\mathcal{F}_j|N} \boldsymbol{\Sigma}_k \bigg)^{\!-1}  \Bigg].
\end{align}
The fixed-point algorithm can be used to compute $e_{j}^{(n)}$ and
has been proved to converge \cite{6172680}. Finally, given that all the involved matrices in $\overline{\mathrm{SINR}}_k$ are block-diagonal, i.e. $\Tmat_k = \mathrm{diag}\{ \boldsymbol{T}_{m,k} \ls \mathrm{for} \ls m\in \mathcal{F}_k \}$ the expression in \eqref{eq:SINRAppr} is obtained where $\boldsymbol{T}_{m,k}$ is defined in \eqref{eq:Tmk}.

\section{Proof of Th. \ref{Th:T2}}\label{proof:SINRAssymDL}
From Eq. \eqref{eq:SINRRZF}, we can derive an approximation for each of the terms in the numerator and denominator, respectively. In order not to overload the formulation, we will denote by $\boldsymbol{\bf \hat{g}}_k = \boldsymbol{m}_k^{(s)} \circ \boldsymbol{\hat{g}}_k$ the sparse version of the channel. We also define $\boldsymbol{\Omega} = \big[  \boldsymbol{\bf \hat{G}} \boldsymbol{\bf \hat{G}}^* + \rho \boldsymbol{I} \big]^{-1} = \frac{1}{MN} \boldsymbol{\Omega}^{'}$ with $\boldsymbol{\Omega}^{'} = \big[  \frac{1}{MN}\boldsymbol{\bf \hat{G}} \boldsymbol{\bf \hat{G}}^* + \frac{\rho}{MN} \boldsymbol{I} \big]^{-1}$. Denote by $\boldsymbol{\Omega}_k $ and $\boldsymbol{\Omega}_k^{'}$ the same as $\boldsymbol{\Omega}$ and $\boldsymbol{\Omega}^{'}$ after removing the contribution of UE $k$ (the same applies to $\boldsymbol{\Omega}_{k,i}$ where the contributions of UEs $k$ and $i$ are removed). We first calculate the value of $\lambda_k$, ensuring that $\mathbb{E} \{  ||\boldsymbol{W} \boldsymbol{v}_k||^2 \} = 1$. 
\begin{align}
    \lambda_k = \frac{1}{\sqrt{ \mathbb{E}\{ \boldsymbol{\bf \hat{g}}_k^*\boldsymbol{\Omega} \boldsymbol{W}^* \boldsymbol{W} \boldsymbol{\Omega} \boldsymbol{\bf \hat{g}}_k\} }}
\end{align}
The term inside the squared root can be asymptotically approximated for large $MN$, $K$ as follows:
\begin{align}
     \boldsymbol{\bf \hat{g}}_k^*\boldsymbol{\Omega} \boldsymbol{W}^* \boldsymbol{W}  \boldsymbol{\Omega}   \boldsymbol{\bf \hat{g}}_k & = \frac{\boldsymbol{\bf \hat{g}}_k^*\boldsymbol{\Omega}_k \boldsymbol{W}^* \boldsymbol{W}  \boldsymbol{\Omega}_k \boldsymbol{\bf \hat{g}}_k}{(1 + \boldsymbol{\bf \hat{g}}_k^*\boldsymbol{\Omega}_k   \boldsymbol{\bf \hat{g}}_k )^2} \\
     & \stackrel{\text{(a)}}{\approx} \frac{ \frac{1}{(MN)^2}  \mathrm{tr}\big[ \boldsymbol{\Gamma}_k^{(g)} \boldsymbol{\Omega}_k^{'} \boldsymbol{W}^* \boldsymbol{W}  \boldsymbol{\Omega}_k^{'} \big] }{(1 + \frac{1}{MN} \mathrm{tr}\big[ \boldsymbol{\Gamma}_k^{(g)} \boldsymbol{\Omega}_k^{'} \big])^2} \\
     & \stackrel{\text{(b)}}{\approx} \frac{ \frac{1}{(MN)^2}  \mathrm{tr}\big[ \boldsymbol{\Gamma}_k^{(g)} \boldsymbol{T}^{'}(\frac{\rho}{MN},\boldsymbol{W}^* \boldsymbol{W}  ) \big] }{(1 + \frac{1}{MN} \mathrm{tr}\big[ \boldsymbol{\Gamma}_k^{(g)} \boldsymbol{T} \big])^2} \\
     & \stackrel{\text{(c)}}{=} \frac{\delta_k}{(1 + \mu_k)^2}
\end{align}
where (a)  is obtained using \cite[Lemma 4]{6172680} and that  $\boldsymbol{\Omega}_k = \frac{1}{MN} \boldsymbol{\Omega}_k^{'}$, (b) results from \cite[Lemma 6]{6172680} and applying  Th. \ref{Th:T2} and Th. \ref{Th:T1} in the numerator and denominator, respectively, with $\boldsymbol{D} = \boldsymbol{\Gamma}_k^{(g)}$, $\boldsymbol{\Phi} = \boldsymbol{W}^* \boldsymbol{W}$, $\boldsymbol{S} = \boldsymbol{0}$, $z = \frac{\rho}{MN}$. Finally, (c) defines the values of $\delta_k= \frac{1}{(MN)^2}  \mathrm{tr}\big[ \boldsymbol{\Gamma}_k^{(g)} \boldsymbol{T}^{'}(\frac{\rho}{MN},\boldsymbol{W}^* \boldsymbol{W}) \big]$ and $\mu_k = \frac{1}{MN} \mathrm{tr}\big[ \boldsymbol{\Gamma}_k^{(g)} \boldsymbol{T} \big]$ as they will be repeatedly used later. As a consequence, from the continous mapping theorem:
\begin{align}\label{eq:lambdak}
    \lambda_k \approx \frac{1}{\sqrt{\frac{\delta_k}{(1 + \mu_k)^2}}}
\end{align}
For the numerator of \eqref{eq:SINRRZF}, given by $|\mathbb{E}\{\boldsymbol{g}_k^{*} \boldsymbol{v}_k   \}  |^2$, we can compute an approximated deterministic equivalent for the term inside the expectation in a similar manner as for $\lambda_k$:
\begin{align}
    \boldsymbol{g}_k^{*} \boldsymbol{v}_k & = \lambda_k \boldsymbol{g}_k^*\boldsymbol{\Omega}   \boldsymbol{\bf \hat{g}}_k  \\
    & \stackrel{\text{(a)}}{=} \lambda_k \frac{\boldsymbol{g}_k^*\boldsymbol{\Omega}_k   \boldsymbol{\bf \hat{g}}_k }{1 + \boldsymbol{g}_k^*\boldsymbol{\Omega}_k   \boldsymbol{\bf \hat{g}}_k} \\
    & \stackrel{\text{(b)}}{\approx}  \lambda_k \frac{ \frac{1}{MN}  \mathrm{tr}\big[ \boldsymbol{\Gamma}_k^{(g)} \boldsymbol{\Omega}_k^{'} \big] }{1 + \frac{1}{MN} \mathrm{tr}\big[ \boldsymbol{\Gamma}_k^{(g)} \boldsymbol{\Omega}_k^{'} \big]} \\
    & \stackrel{\text{(c)}}{\approx} \lambda_k \frac{\mu_k}{1 + \mu_k}
\end{align}
where (a) follows from \cite[Lemma 1]{6172680} (b) is derived applying \cite[Lemma 4]{6172680} and the fact that  $\boldsymbol{\Omega}_k = \frac{1}{MN} \boldsymbol{\Omega}_k^{'}$. Finally, (c) is obtained by applying the definition of $\mu_k$ previously derived. From the continuous mapping theorem and substituting the value of $\lambda_k$ provided in \eqref{eq:lambdak}, the numerator therefore has an approximated value of
\begin{align}
    |\mathbb{E}\{\boldsymbol{g}_k^{*} \boldsymbol{v}_k   \}  |^2 & \approx \lambda_k^2 \frac{\mu_k^2}{(1+\mu_k)^2} \\
    & = \frac{\mu_k^2}{\delta_k}
\end{align}

For the interfering terms $\mathbb{E}\{ |\boldsymbol{g}_k^{*} \boldsymbol{v}_i |^2  \}  $ we can proceed similarly and obtain a deterministic approximation by considering the term inside the expectation as follows:
\begin{align}
    |\boldsymbol{g}_k^{*} \boldsymbol{v}_i|^2 &  = \lambda_i^2 |\boldsymbol{g}_k^{*}  \boldsymbol{\Omega}   \boldsymbol{\bf \hat{g}}_i  |^2 \\
    & \stackrel{\text{(a)}}{=} \lambda_i^2 \frac{|\boldsymbol{g}_k^{*}  \boldsymbol{\Omega}_i   \boldsymbol{\bf \hat{g}}_i  |^2}{(1+ \boldsymbol{\bf \hat{g}}_i^*  \boldsymbol{\Omega}_i   \boldsymbol{\bf \hat{g}}_i )^2}\\ 
    & \stackrel{\text{(b)}}{=}  \lambda_i^2\frac{| \frac{1}{MN} \boldsymbol{g}_k^{*}  \boldsymbol{\Omega}_i^{'}   \boldsymbol{\bf \hat{g}}_i  |^2}{(1+ \frac{1}{MN} \boldsymbol{\bf \hat{g}}_i^*  \boldsymbol{\Omega}_i^{'}   \boldsymbol{\bf \hat{g}}_i )^2} \\
    & \stackrel{\text{(c)}}{\approx} \lambda_i^2 \frac{|\frac{1}{MN} \boldsymbol{g}_k^{*}  \boldsymbol{\Omega}_i^{'}   \boldsymbol{\bf \hat{g}}_i  |^2}{(1+ \mu_i )^2} \\
    & \stackrel{\text{(d)}}{=} \frac{1}{\delta_i} |\frac{1}{MN} \boldsymbol{g}_k^{*}  \boldsymbol{\Omega}_i^{'}   \boldsymbol{\bf \hat{g}}_i  |^2
\end{align}
where (a) follows from \cite[Lemma 1]{6172680}, (b) substitutes $\boldsymbol{\Omega}_i = \frac{1}{MN} \boldsymbol{\Omega}_i^{'}$, (c) applies the definition of $\mu_k$ in the denominator and (d) substitutes the value of $\lambda_i$ previously derived.

To get a deterministic equivalent for the previous equation, we first know that:
\begin{align}\label{eq:intki}
    |\frac{1}{MN} \boldsymbol{g}_k^{*}  \boldsymbol{\Omega}_i^{'}  \boldsymbol{\bf \hat{g}}_i  |^2 \approx \frac{1}{(MN)^2} \boldsymbol{g}_k^{*} \boldsymbol{\Omega}_i^{'} \boldsymbol{\Gamma}_i^{(g)}  \boldsymbol{\Omega}_i^{'} \boldsymbol{g}_k
\end{align}
being a direct consequence of  \cite[Lemma 4]{6172680}. After applying the matrix inversion lemma to $\boldsymbol{\Omega}_i^{'}$ to remove the dependency with respect to UE $k$, we obtain that
\begin{align}\label{eq:matSigmaki}
    \boldsymbol{\Omega}_i^{'} = \boldsymbol{\Omega}_{i,k}^{'} - \frac{ \frac{1}{MN} \boldsymbol{\Omega}_{i,k}^{'} \boldsymbol{ \hat{g}}_k \boldsymbol{ \hat{g}}_k^*  \boldsymbol{\Omega}_{i,k}^{'}}{1 +  \frac{1}{MN} \boldsymbol{ \hat{g}}_k^* \boldsymbol{\Omega}_{i,k}^{'}\boldsymbol{ \hat{g}}_k } 
\end{align}
Substituting \eqref{eq:matSigmaki} in \eqref{eq:intki} yields the following:
\begin{align}
    \frac{1}{(MN)^2} \boldsymbol{g}_k^{*} \boldsymbol{\Omega}_i^{'} \boldsymbol{\Gamma}_i^{(g)}  \boldsymbol{\Omega}_i^{'} \boldsymbol{g}_k = \mathrm{T}_1 + \mathrm{T}_2 + \mathrm{T}_3
\end{align}
where each of the terms is provided below:
\begin{align}
    \mathrm{T}_1 & = \frac{1}{(MN)^2} \boldsymbol{g}_k^{*} \boldsymbol{\Omega}_{i,k}^{'} \boldsymbol{\Gamma}_i^{(g)}  \boldsymbol{\Omega}_{i,k}^{'} \boldsymbol{g}_k \\
    &  \stackrel{\text{(a)}}{\approx} \frac{1}{(MN)^2}  \mathrm{tr}\big[ \boldsymbol{R}_k^{(g)} \boldsymbol{T}^{'}(\frac{\rho}{MN},\boldsymbol{\Gamma}_i^{(g)}) \big]
\end{align}
where (a) combines both \cite[Lemma 4]{6172680}  and Th. \ref{Th:T2} with the following substitutions  $\boldsymbol{D} = \boldsymbol{\Gamma}_k^{(g)}$, $\boldsymbol{\Phi} = \boldsymbol{\Gamma}_i^{(g)}$, $\boldsymbol{S} = \boldsymbol{0}$, $z = \frac{\rho}{MN}$. In addition,

\begin{align}
    \mathrm{T}_2 & =\frac{1}{(MN)^2} \frac{ \frac{1}{(MN)^2} |\boldsymbol{\hat{g}}_k^{*} \boldsymbol{\Omega}_{i,k}^{'} \boldsymbol{g}_k  |^2 \boldsymbol{\hat{g}}_k^{*} \boldsymbol{\Omega}_{i,k}^{'}\boldsymbol{\Gamma}_i^{(g)} \boldsymbol{\Omega}_{i,k}^{'}\boldsymbol{\hat{g}}_k }{( 1 + \frac{1}{MN} \boldsymbol{\hat{g}}_k^{*} \boldsymbol{\Omega}_{i,k}^{'}\boldsymbol{\hat{g}}_k )^2} \\
    & \stackrel{\text{(a)}}{\approx} \frac{1}{MN} \frac{\mu_k^2 \frac{1}{MN}  \boldsymbol{\hat{g}}_k^{*} \boldsymbol{\Omega}_{i,k}^{'}\boldsymbol{\Gamma}_i^{(g)} \boldsymbol{\Omega}_{i,k}^{'}\boldsymbol{\hat{g}}_k }{(1 + \mu_k)^2} \\
    & \stackrel{\text{(b)}}{\approx} \frac{1}{MN} \frac{\mu_k^2 \frac{1}{MN} \mathrm{tr}\big[ \boldsymbol{\Gamma}_k^{(g)} \boldsymbol{T}^{'}(\frac{\rho}{MN},\boldsymbol{\Gamma}_i^{(g)}) \big] }{(1 + \mu_k)^2}
\end{align}
where (a) comes from the definition of $\mu_i$, and (b) arises from applying \cite[Lemma 6]{6172680} and Th. \ref{Th:T2} to the term $\frac{1}{MN}  \boldsymbol{\hat{g}}_k^{*} \boldsymbol{\Omega}_{i,k}^{'}\boldsymbol{\Gamma}_i^{(g)} \boldsymbol{\Omega}_{i,k}^{'}\boldsymbol{\hat{g}}_k$ with the same substitutions as for $\mathrm{T}_1$. Finally, the last term can be computed as

\begin{align}
    \mathrm{T}_3 & = - \frac{2}{(MN)^2}\mathbb{R} \bigg\{ \frac{\frac{1}{MN} \boldsymbol{\hat{g}}_k^{*}\boldsymbol{\Omega}_{i,k}^{'}\boldsymbol{{g}}_k \boldsymbol{{g}}_k^*  \boldsymbol{\Omega}_{i,k}^{'} \boldsymbol{\Gamma}_i^{(g)}  \boldsymbol{\Omega}_{i,k}^{'} \boldsymbol{\hat{g}}_k  }{1 +  \frac{1}{MN} \boldsymbol{ \hat{g}}_k^* \boldsymbol{\Omega}_{i,k}^{'}\boldsymbol{ \hat{g}}_k} \bigg\} \\
    & \stackrel{\text{(a)}}{\approx} \frac{2}{MN}\mathbb{R} \bigg\{ \frac{\mu_k \frac{1}{MN} \boldsymbol{{g}}_k^*  \boldsymbol{\Omega}_{i,k}^{'} \boldsymbol{\Gamma}_i^{(g)}  \boldsymbol{\Omega}_{i,k}^{'} \boldsymbol{\hat{g}}_k  }{1 +  \mu_k} \bigg\} \\
    & \stackrel{\text{(b)}}{\approx} \frac{2}{MN}\mathbb{R} \bigg\{ \frac{\mu_k \frac{1}{MN} \mathrm{tr}\big[ \boldsymbol{\Gamma}_k^{(g)} \boldsymbol{T}^{'}(\frac{\rho}{MN},\boldsymbol{\Gamma}_i^{(g)})  }{1 +  \mu_k} \bigg\}
\end{align}
where (a) is obtained from the definition of $\mu_k$ and (b) follows the same step as to calculate $\mathrm{T}_2$ (b). Consequently, the interfering terms accept an assymptotic approximation as follows:
\begin{align}
    |\mathbb{E}\{\boldsymbol{g}_k^{*} \boldsymbol{v}_k   \}  |^2 & \approx \frac{\theta_{k,i}}{\delta_i}
\end{align}
with $\theta_{k,i} =  \mathrm{T}_1 + \mathrm{T}_2 + \mathrm{T}_3$.

Finally, the term $\mathrm{var}( \boldsymbol{g}_k^{*} \boldsymbol{v}_k )$ can be shown to approximately converge to zero in the asymptotic regime as follows:
\begin{align}
    \mathrm{var}( \boldsymbol{g}_k^{*} \boldsymbol{v}_k ) & = \mathbb{E}\{ |\boldsymbol{g}_k^{*} \boldsymbol{v}_k|^2 \} - \mathbb{E}\{ \boldsymbol{g}_k^{*} \boldsymbol{v}_k \}^2 \\
    & \approx  \bigg( \lambda_k \frac{\mu_k}{1 + \mu_k} \bigg)^2 - \bigg( \lambda_k \frac{\mu_k}{1 + \mu_k} \bigg)^2 
\end{align}
As a consequence, the result in Th. \ref{Th:T2} is obtained.

\section{Proof of Prop. \ref{prop:WMMSE}}\label{proof:WMMSE}
From Eq. \eqref{eq:SINRmmse}, under perfect CSI it can be shown that the SINR achieved by UE $k$ is:
\begin{align}
    \mathrm{SINR}_k = \boldsymbol{{g}}_k^{*} \bigg(  \sum 
    \limits_{i\neq k}^{K} \boldsymbol{{g}}_i \boldsymbol{{g}}_i^{*}p_i + \boldsymbol{\Sigma}_k \bigg)^{-1}  \boldsymbol{{g}}_k,
\end{align}
where, for simplicity we assume that $\Ms = \boldsymbol{1}$ though the same analysis and conclusion is valid for subsets of APs and UEs. Therefore, $\boldsymbol{\Sigma}_k$ is a block diagonal matrix $\boldsymbol{\Sigma}_k =  \mathrm{diag}\{ \boldsymbol{\Sigma}_{k,m} \in \mathbb{C}^{L\times L} \ls \mathrm{for} \ls m \in \mathcal{F}_k  \}  $ where $\boldsymbol{\Sigma}_{k,m} =  \sigma^2 \Wmh{} \Wm$. Note that $\boldsymbol{{g}}_k = \boldsymbol{W} \boldsymbol{{h}}_k$. As a consequence:
\begin{align}
    \mathrm{SINR}_k & = \boldsymbol{{h}}_k^{*} \boldsymbol{W} \bigg(  \sum 
    \limits_{i\neq k}^{K} \boldsymbol{W}^{*} \boldsymbol{{h}}_i \boldsymbol{{h}}_i^{*} \boldsymbol{W} p_i + \sigma^2 \boldsymbol{W}^{*}\boldsymbol{W} \bigg)^{-1}   \boldsymbol{W} \boldsymbol{{h}}_k \\
    & =  \boldsymbol{{h}}_k^{*} \boldsymbol{W} \bigg( \boldsymbol{W}^* \Big( \sum 
    \limits_{i\neq k}^{K}  \boldsymbol{{h}}_i \boldsymbol{{h}}_i^{*}  p_i + \sigma^2 \boldsymbol{I} \Big) \boldsymbol{W}  \bigg)^{-1}   \boldsymbol{W} \boldsymbol{{h}}_k.
\end{align}
Consider the generic case of $\mathrm{rank} ( \boldsymbol{W}_m ) = r_m \leq L$. It can be easily shown that if $\exists \ls r_m < L$ $\bigg( \boldsymbol{W}^* \Big( \sum 
\limits_{i\neq k}^{K}  \boldsymbol{{h}}_i \boldsymbol{{h}}_i^{*}  p_i + \sigma^2 \boldsymbol{I} \Big) \boldsymbol{W}  \bigg)^{-1}$ does not exist. As a consequence, each $\boldsymbol{W}_m$ must be full rank. After doing the compact SVD on $\boldsymbol{W} = \boldsymbol{U} \boldsymbol{Q} \in \mathbb{C}^{NM \times r}$ where $r = \sum_m r_m$ and both $\boldsymbol{U}$ and $\boldsymbol{Q}$ are block diagonal. More particularly, $\boldsymbol{U} = \mathrm{diag}\{ \boldsymbol{U}_m \ls \mathrm{for} \ls m=1,\dots,M \}$ with each $\boldsymbol{U}_m \in \mathbb{C}^{N \times r_m}$ and $\boldsymbol{U}_m^* \boldsymbol{U}_m = \boldsymbol{I}$. Similarly, $\boldsymbol{Q} = \mathrm{diag}\{ \boldsymbol{Q}_m \ls \mathrm{for} \ls m=1,\dots,M \}$ with each $\boldsymbol{Q}_m \in \mathbb{C}^{r_m \times r_m}$. Then it follows that
\begin{align}
\begin{gathered}
    \mathrm{SINR}_k  =  \\ \boldsymbol{{h}}_k^{*} \boldsymbol{U} \boldsymbol{Q} \bigg( \boldsymbol{Q}^{*} \boldsymbol{U}^{*} \Big( \sum 
    \limits_{i\neq k}^{K}   \boldsymbol{{h}}_i \boldsymbol{{h}}_i^{*}  p_i + \sigma^2  \boldsymbol{I} \Big) \boldsymbol{U}  \boldsymbol{Q}  \bigg)^{-1}   \boldsymbol{Q}^{*} \boldsymbol{U}^{*} \boldsymbol{{h}}_k \\
     = \boldsymbol{{h}}_k^{*} \boldsymbol{U} \bigg(   \boldsymbol{U}^{*} \Big( \sum 
    \limits_{i\neq k}^{K}   \boldsymbol{{h}}_i \boldsymbol{{h}}_i^{*}  p_i + \sigma^2  \boldsymbol{I} \Big) \boldsymbol{U} \bigg)^{-1}  \boldsymbol{U}^{*} \boldsymbol{{h}}_k
    \end{gathered}
\end{align}
As a consequence, the UL SINR after MMSE reception under perfect CSI does not depend on $\boldsymbol{Q}$. Therefore, any non-singular $\boldsymbol{Q}$ maximizes $ \mathrm{SINR}_k$.

\section{Proof of Prop. \ref{prop:WRZF}}\label{proof:WRZF}
Under perfect CSI, the DL-SINR under RZF precoding is
\begin{align}
    \mathrm{SINR}_k & =  \frac{|\boldsymbol{g}_k^{*} \boldsymbol{v}_k^{*}  |^2  p_k}{ \sum \limits_{i\neq k}^K |\boldsymbol{g}_k^{*} \boldsymbol{v}_i|^2p_i  + \sigma^2} 
\end{align}
According to \cite{6200372}, the term $\sum \limits_{i\neq 1}^K |\boldsymbol{g}_k^{*} \boldsymbol{v}_i|^2p_i  + \sigma^2 \approx \sum \limits_{i\neq k}^K |\boldsymbol{g}_i^{*} \boldsymbol{v}_k|^2p_i  + \sigma^2$. As a consequence, $\mathrm{SINR}_k$ can be approximately rewritten as 
\begin{align}\label{eq:SLNRSINR}
    \mathrm{SINR}_k & \approx  \frac{|\boldsymbol{g}_k^{*} \boldsymbol{v}_k^{*}  |^2  p_k}{ \sum \limits_{i\neq k}^K |\boldsymbol{g}_i^{*} \boldsymbol{v}_k|^2p_i  + \sigma^2} .
\end{align}
Again, and for simplicity, we assume $\Ms = \boldsymbol{1}$. Using a RZF precoding
\begin{align}
    \boldsymbol{V} 
    & = \big[  \boldsymbol{{G}} \boldsymbol{{G}}^* + \rho \boldsymbol{I} \big]^{-1}  \boldsymbol{{G}}\boldsymbol{\Lambda} .
\end{align}
where $\boldsymbol{\Lambda} = \mathrm{diag}( \lambda_1,\dots,\lambda_K)$ such that  $||\boldsymbol{v}_k||^2 = 1$.  Therefore
\begin{align}
    \lambda_k = \frac{1}{\sqrt{|| \boldsymbol{W} \big[  \boldsymbol{{G}} \boldsymbol{{G}}^* + \rho \boldsymbol{I} \big]^{-1} \boldsymbol{g}_k ||^2 }}.
\end{align}
Substituting the previous expression in Eq. \eqref{eq:SLNRSINR} we obtain
\begin{align}\label{eq:SINRApprx}
    \mathrm{SINR}_k & \approx \frac{|\boldsymbol{g}_k^{*} \big[  \boldsymbol{{G}} \boldsymbol{{G}}^* + \rho \boldsymbol{I} \big]^{-1} \boldsymbol{g}_k  |^2 \lambda_k^2 p_k}{ \sum \limits_{i\neq k}^K |\boldsymbol{g}_i^{*}  \big[  \boldsymbol{{G}} \boldsymbol{{G}}^* + \rho \boldsymbol{I} \big]^{-1} \boldsymbol{g}_k  |^2  p_i  + \sigma^2} \\
    & = \frac{|\boldsymbol{g}_k^{*} \big[  \boldsymbol{{G}} \boldsymbol{{G}}^* + \rho \boldsymbol{I} \big]^{-1} \boldsymbol{g}_k  |^2 p_k}{ \sum \limits_{i\neq k}^K |\boldsymbol{g}_i^{*}  \big[  \boldsymbol{{G}} \boldsymbol{{G}}^* + \rho \boldsymbol{I} \big]^{-1} \boldsymbol{g}_k  |^2 \lambda_k^2 p_i  + \frac{\sigma^2}{\lambda_k^2}} \\
    & = \frac{\boldsymbol{h}_k^{*} \boldsymbol{\Omega} \boldsymbol{h}_k \boldsymbol{h}_k^{*} \boldsymbol{\Omega} \boldsymbol{h}_k   p_k}{ \boldsymbol{h}_k^{*}  \boldsymbol{\Omega} \big( \boldsymbol{H}\boldsymbol{P} \boldsymbol{H}^*  -  \boldsymbol{h}_k\boldsymbol{h}_k^{*}p_k + \sigma^2 \boldsymbol{I} \big)  \boldsymbol{\Omega} \boldsymbol{h}_k}
\end{align}
where, in the last step we define by $\boldsymbol{\Omega} = \boldsymbol{W}  \big[  \boldsymbol{{G}} \boldsymbol{{G}}^* + \rho \boldsymbol{I} \big]^{-1} \boldsymbol{W}^*$. Now, by compact SVD $\boldsymbol{W} = \boldsymbol{U} \boldsymbol{Q} \in \mathbb{C}^{NM \times r}$ where $r = \sum_m r_m$ and both $\boldsymbol{U}$ and  $\boldsymbol{Q}$ are block diagonal. Let $\boldsymbol{\bf H} = \boldsymbol{U}^* \boldsymbol{H}$ and $\boldsymbol{\bf h}_k = \boldsymbol{U}^* \boldsymbol{h}_k$. Then $\boldsymbol{h}_k^{*} \boldsymbol{\Omega}$ can be written as
\begin{align}
    \boldsymbol{h}_k^{*} \boldsymbol{\Omega} & = \boldsymbol{h}_k^{*} \boldsymbol{W}  \big[ \boldsymbol{W}^* \boldsymbol{{H}} \boldsymbol{{H}}^* \boldsymbol{W} + \rho \boldsymbol{I} \big]^{-1} \boldsymbol{W}^* \\
    & =  \boldsymbol{\bf h}_k^{*} \big[  \boldsymbol{\bf H} \boldsymbol{\bf H}^*  + \rho (\boldsymbol{Q} \boldsymbol{Q}^*) ^{-1}\big]^{-1} \boldsymbol{U}^*.
\end{align}
We define by $\boldsymbol{B} = \big[  \boldsymbol{\bf H} \boldsymbol{\bf H}^*  + \rho (\boldsymbol{Q} \boldsymbol{Q}^*) ^{-1}\big]^{-1} $. Operating on \eqref{eq:SINRAppr}, we obtain that
\begin{align}\label{eq:SINRFinalU}
    \mathrm{SINR}_k & \approx \frac{ \boldsymbol{\bf h}_k^{*} \boldsymbol{B}\boldsymbol{\bf h}_k \boldsymbol{\bf h}_k^{*} \boldsymbol{B}\boldsymbol{\bf h}_k p_k}{ \boldsymbol{\bf h}_k^{*} \boldsymbol{B} \big( \boldsymbol{\bf H}\boldsymbol{P} \boldsymbol{\bf H}^* +  \sigma^2 \boldsymbol{I} \big)\boldsymbol{B} \boldsymbol{\bf h}_k  - \boldsymbol{\bf h}_k^{*} \boldsymbol{B}\boldsymbol{\bf h}_k \boldsymbol{\bf h}_k^{*} \boldsymbol{B}\boldsymbol{\bf h}_k p_k } \\
    & = \frac{\mathrm{K}_k}{1 - \mathrm{K}_k}
\end{align}
where $0 \leq \mathrm{K}_k \leq 1$ with $\mathrm{K}_k$ defined as
\begin{align}
    \mathrm{K}_k & = \frac{ \boldsymbol{\bf h}_k^{*} \boldsymbol{B}\boldsymbol{\bf h}_k \boldsymbol{\bf h}_k^{*} \boldsymbol{B}\boldsymbol{\bf h}_k p_k}{\boldsymbol{\bf h}_k^{*} \boldsymbol{B} \big( \boldsymbol{\bf H}\boldsymbol{P} \boldsymbol{\bf H}^* +  \sigma^2 \boldsymbol{I} \big)\boldsymbol{B} \boldsymbol{\bf h}_k} \\
    & = \frac{ \boldsymbol{b}_k^* \boldsymbol{\bf h}_k \boldsymbol{\bf h}_k^* \boldsymbol{b}_k p_k  }{ \boldsymbol{b}_k^* \big( \boldsymbol{\bf H}\boldsymbol{P} \boldsymbol{\bf H}^* +  \sigma^2 \boldsymbol{I} \big) \boldsymbol{b}_k }.
\end{align}
with $\boldsymbol{b}_k = \boldsymbol{B} \boldsymbol{\bf h}_k$.  Note that \eqref{eq:SINRFinalU} is an increasing function with respect to $\mathrm{K}_k$. Thus, maximizing $\mathrm{K}_k$ is equivalent to maximizing the SINR. Since $\mathrm{K}_k$ follows a Rayleigh quotient, the optimal $\boldsymbol{b}_k$  maximizing $\mathrm{K}_k$ is the eigenvector associated to the maximum eigenvalue of $\big( \boldsymbol{\bf H}\boldsymbol{P} \boldsymbol{\bf H}^* +  \sigma^2 \boldsymbol{I} \big)^{-1}\boldsymbol{\bf h}_k \boldsymbol{\bf h}_k^* $. Given that the previous matrix is rank-ones, there is only one eigenvector. As a consequence:
\begin{align}
    \boldsymbol{b}_k^{(\mathrm{max})} = \big( \boldsymbol{\bf H}\boldsymbol{P} \boldsymbol{\bf H}^* +  \sigma^2 \boldsymbol{I} \big)^{-1}\boldsymbol{\bf h}_k.
\end{align}
By definition, $\boldsymbol{b}_k = \boldsymbol{B} \boldsymbol{\bf h}_k = \big[  \boldsymbol{\bf H} \boldsymbol{\bf H}^*  + \rho (\boldsymbol{Q} \boldsymbol{Q}^*) ^{-1}\big]^{-1} \boldsymbol{\bf h}_k$. As a consequence, to obtain that  $\boldsymbol{b}_k = \boldsymbol{b}_k^{(\mathrm{max})}$, matrix $\boldsymbol{Q}$ has to satisfy $\boldsymbol{Q} \boldsymbol{Q}^* = \boldsymbol{I}$, i.e. being semi-unitary.

\section{Proof of Prop. \ref{prop:WAF}}\label{proof:WAF}
Let us assume a generic and nonsingular  $\boldsymbol{A}_m$. Then, $\boldsymbol{A}_m = \boldsymbol{U}_1 \boldsymbol{D}_1 \boldsymbol{V}_1^* $ with $\boldsymbol{U}_1$ and $\boldsymbol{V}_1$ being unitary. After adding $\boldsymbol{A}_m$, the output of the analog beamformer is 
\begin{align}
    \boldsymbol{\hat{W}}_m \boldsymbol{A}_m = \boldsymbol{\hat{W}}_m \boldsymbol{U}_1 \boldsymbol{D}_1 \boldsymbol{V}_1^*
\end{align}
Now, let us add the compensation matrix $\boldsymbol{F}_m$. Recall that the compensation matrix tries to somehow compensate the matrix that is in front of it, as shown in Eq. \eqref{eq:CompMat}.  In this case, for a generic $\boldsymbol{A}_m$, the compensation matrix of $\boldsymbol{\hat{W}}_m^*\boldsymbol{A}_m$ is  $\boldsymbol{F}_m = \boldsymbol{{V}}_1 \boldsymbol{{D}}_1^{-1} \boldsymbol{{V}}_1^{*}$ following Eq. \eqref{eq:CompMat}. Then, the product of the three matrices is:
\begin{align}
    \boldsymbol{\hat{W}}_m \boldsymbol{A}_m \boldsymbol{F}_m = \boldsymbol{\hat{W}}_m \boldsymbol{U}_1 \boldsymbol{D}_1 \boldsymbol{V}_1^* \boldsymbol{{V}}_1 \boldsymbol{{D}}_1^{-1} \boldsymbol{{V}}_1^{*} = \boldsymbol{\hat{W}}_m \boldsymbol{U}_1 \boldsymbol{{V}}_1^{*} 
\end{align}
Note that since both $\boldsymbol{{U}}_1$ and $\boldsymbol{{V}}_1$ are unitary, we are not modifying the optimality of the solution. As a consequence, $\boldsymbol{\hat{W}}_m \boldsymbol{A}_m \boldsymbol{F}_m$ is also an unconstrained combiner, as the initial one $\boldsymbol{\hat{W}}_m$, that does not change the output power.

\section{Proof of Prop. \ref{prop:UBLBHyb}}\label{proof:UBLBHyb}
For simplicity, let us assume $\Ms = \boldsymbol{1}$. Under perfect CSI and maximum ratio combining (MRC), i.e. $\boldsymbol{v}_k = \boldsymbol{g}_k$ for the unconstrained solution of $\boldsymbol{W}$, the SINR in \eqref{eq:SINRk} becomes
\begin{align}\label{eq:SINRk}
    \mathrm{SINR}_k = \frac{ |\boldsymbol{g}_k^* \boldsymbol{{g}}_k|^2 p_k }{ \sum
    \limits_{i\neq k}^{K}|\boldsymbol{g}_k^*  \boldsymbol{{g}}_i|^2 p_i + \sigma^2 \boldsymbol{g}_k^* \boldsymbol{g}_k }.
\end{align}
For $M\to\infty$ at a faster peace than $K$, $\boldsymbol{g}_k^*  \boldsymbol{{g}}_i \to 0$ almost surely. As a consequence, the asymptotic SINR achieved by UE $k$ is
\begin{align}\label{eq:SINRPCSI}
    \overline{\mathrm{SINR}}_k = \frac{p_k}{\sigma^2}  \sum \limits_{m =1}^M  \mathrm{tr} \Big[ \boldsymbol{W}_m \boldsymbol{R}_{m,k} \boldsymbol{W}_{m} \Big].
\end{align}
Let $\{ \lambda_{m,k}^{(1)},\dots,\lambda_{m,k}^{(N)} \}$ be the eigenvalues of $\boldsymbol{R}_{m,k}$ sorted in descending order. Recall that $\boldsymbol{W}_m$ is semi-unitary. Therefore we can construct a unitary $\boldsymbol{W}_m^{(u)} = [ \boldsymbol{W}_m \ls \boldsymbol{W}_{m,0} ]$ such that $\boldsymbol{W}_{m,0}^* \boldsymbol{W}_{m,0} = \boldsymbol{I}$ and $\boldsymbol{W}_{m}^* \boldsymbol{W}_{m,0} = \boldsymbol{0}$. Provided that $\boldsymbol{W}_m^{(u)}$ is unitary, $\boldsymbol{W}_m^{(u)\ls *} \boldsymbol{R}_{m,k} \boldsymbol{W}_m^{(u)}$ has the same eigenvalues as $\boldsymbol{R}_{m,k}$ and can be written as
\begin{align}
    \boldsymbol{W}_m^{(u) \ls *} \boldsymbol{R}_{m,k} \boldsymbol{W}_m^{(u)} & = \\
    & \left[ {\begin{array}{cc}
    \boldsymbol{W}_m^{*} \boldsymbol{R}_{m,k} \boldsymbol{W}_m & \boldsymbol{W}_m^{*} \boldsymbol{R}_{m,k} \boldsymbol{W}_{m,0} \\
    \boldsymbol{W}_{m,0} \boldsymbol{R}_{m,k} \boldsymbol{W}_m^{*} & \boldsymbol{W}_{m,0}^{ *} \boldsymbol{R}_{m,k} \boldsymbol{W}_{m,0} \\
  \end{array} } \right]
\end{align}
Denote the eigenvalues of $\boldsymbol{W}_m^{*} \boldsymbol{R}_{m,k} \boldsymbol{W}_m$ by $\mu_{m,k}^{(1)} \geq \mu_{m,k}^{(2)} \geq \dots \geq \mu_{m,k}^{(L)}$. For a fully digital receiver, the asymptotic SINR is
\begin{align}
    \overline{\mathrm{SINR}}_k^{\mathrm{FD}} = \frac{p_k}{\sigma^2}  \sum \limits_{m =1}^M  \mathrm{tr} \Big[ \boldsymbol{R}_{m,k} \Big].
\end{align}
By the Cauchy's interlacing theorem, the eigenvalues of the leading principal submatrix $\boldsymbol{W}_m^{*} \boldsymbol{R}_{m,k} \boldsymbol{W}_m$ satisfy
\begin{align}
    \lambda_{m,k}^{(i)} \geq \mu_{m,k}^{(i)} \geq \lambda_{m,k}^{(N-L+i)} \mspace{10mu} \mathrm{for} \mspace{4mu} i=1,\dots,L.
\end{align}
As a consequence, two bounds can be derived. A lower bound for the gap between hybrid and full digital occurs when $\mu_{m,k}^{(i)} = \lambda_{m,k}^{(i)}$. As a consequence, such a gap, denoted by $\delta_{\mathrm{LB}}$ is:
\begin{align}
    \delta & = \overline{\mathrm{SINR}}_k^{\mathrm{FD}} - \overline{\mathrm{SINR}}_k \\
    & = \frac{p_k}{\sigma^2}  \sum \limits_{m =1}^M \bigg( \sum \limits_{n=1}^N \lambda_{m,k}^{(n)} - \sum \limits_{n=1}^L \mu_{m,k}^{(n)} \bigg) \\
    & \geq \frac{p_k}{\sigma^2}  \sum \limits_{m =1}^M \sum \limits_{n=L+1}^N \lambda_{m,k}^{(n)} \\
    & = \delta_{\mathrm{LB}}.
\end{align}
To the contrary, the gap is maximum when $\mu_{m,k}^{(i)} = \lambda_{m,k}^{(N-M+i)}$. As a consequence, an upper bound on the gap between hybrid and full digital can be derived
\begin{align}
    \delta & = \overline{\mathrm{SINR}}_k^{\mathrm{FD}} - \overline{\mathrm{SINR}}_k \\
    & = \frac{p_k}{\sigma^2}  \sum \limits_{m =1}^M \bigg( \sum \limits_{n=1}^N \lambda_{m,k}^{(n)} - \sum \limits_{n=1}^L \mu_{m,k}^{(n)} \bigg) \\
    & \leq \frac{p_k}{\sigma^2}  \sum \limits_{m =1}^M \bigg( \sum \limits_{n=1}^L ( \lambda_{m,k}^{(n)} - \lambda_{m,k}^{(N-L + n)} ) +  \sum \limits_{n=L+1}^N \lambda_{m,k}^{(n)} \bigg) \\
    & = \delta_{\mathrm{UB}}
\end{align}

\end{document}